\documentclass[sigconf, nonacm]{acmart}
\usepackage{enumitem}

\newcommand\vldbdoi{XX.XX/XXX.XX}
\newcommand\vldbpages{XXX-XXX}
\newcommand\vldbvolume{14}
\newcommand\vldbissue{1}
\newcommand\vldbyear{2020}
\newcommand\vldbauthors{\authors}
\newcommand\vldbtitle{\shorttitle} 
\newcommand\vldbavailabilityurl{https://github.com/ThinhOn/fair-retrieval}
\newcommand\vldbpagestyle{plain} 

\newcommand{\MAtFair}{{\bf MAFair-KNN} }
\newcommand{\MAtFairOne}{{\bf 1-Fair-KNN} }
\newcommand{\MAtFairTwo}{{\bf 2-Fair-KNN} }
\newcommand{\MAtFairThree}{{\bf 3+-Fair-KNN} }

\newcommand{\AlgNN}{{\em Alg-Near-Neighbor} }

\newcommand{\AlgNNP}{{\em Alg-Near-$\pi$} }

\newcommand{\AlgPP}{{\em Alg-PostP-$\pi$} }

\newcommand{\AlgPPOne}{{\em Post-Alg-1} }
\newcommand{\AlgPPTwo}{{\em Post-Alg-2} }
\newcommand{\AlgPPThree}{{\em Post-Alg-3+} }

\newcommand{\AlgFairOne}{{\em Alg-1-Fair} }
\newcommand{\AlgFairTwo}{{\em Alg-2-Fair} }
\newcommand{\AlgFairThree}{{\em Alg-3+-Fair} }

\usepackage{algorithm}
\usepackage[noend]{algpseudocode}
\usepackage{optidef}
\usepackage{adjustbox}
\usepackage{threeparttable}
\usepackage{subcaption}

\usepackage{mathtools}

\newcommand{\calD}{\mathcal{D}}

\newcommand{\ba}{\mathbf{a}}

\newcommand{\bx}{\mathbf{x}}

\newcommand{\bbR}{\mathbb{R}}

\newcommand{\revise}[1]{\textcolor{blue}{\textbf{#1}}}
\newcommand{\norm}[1]{\left \| #1 \right \|}
\usepackage{bbm}

\begin{document}
\title{ Multi-Attribute Group Fairness in $k$-NN Queries on Vector Databases}

\author{Thinh On}
\affiliation{%
  \institution{NJIT}
  \streetaddress{P.O. Box 1212}
  \city{Newark, NJ}
  \state{USA}
}
\email{to58@njit.edu}

\author{Baruch Schieber}
\affiliation{%
  \institution{NJIT}
  \city{Newark, NJ}
  \country{USA}
}
\email{sbar@njit.edu}

\author{Senjuti Basu Roy}
\affiliation{%
  \institution{NJIT}
  \city{Newark, NJ}
  \country{USA}
}
\email{senjutib@njit.edu}

\begin{abstract}
We initiate the study of multi-attribute group fairness in $k$-nearest neighbor ($k$-NN) search over vector databases. Unlike prior work that optimizes efficiency or query filtering, fairness imposes count constraints to ensure proportional representation across groups defined by protected attributes. When fairness spans multiple attributes, these constraints must be satisfied simultaneously, making the problem computationally hard. To address this, we propose a computational framework that produces high‐quality approximate nearest neighbors with good trade-offs between search time, memory/indexing cost, and recall. We adapt locality-sensitive hashing (LSH) to accelerate candidate generation and build a lightweight index over the Cartesian product of protected attribute values. Our framework retrieves candidates satisfying joint count constraints and then applies a post-processing stage to construct fair $k$-NN results across all attributes. For 2 attributes, we present an exact polynomial-time flow-based algorithm; for 3 or more, we formulate ILP-based exact solutions with higher computational cost. We provide theoretical guarantees, identify efficiency–fairness trade-offs, and empirically show that existing vector search methods cannot be directly adapted for fairness. Experimental evaluations demonstrate the generality of the proposed framework and scalability.

\end{abstract}

\maketitle
\pagestyle{\vldbpagestyle}
\begingroup\small\noindent\raggedright\textbf{PVLDB Reference Format:}\\
\vldbauthors. \vldbtitle. PVLDB, \vldbvolume(\vldbissue): \vldbpages, \vldbyear.\\
\href{https://doi.org/\vldbdoi}{doi:\vldbdoi}
\endgroup
\begingroup
\renewcommand\thefootnote{}\footnote{\noindent
This work is licensed under the Creative Commons BY-NC-ND 4.0 International License. Visit \url{https://creativecommons.org/licenses/by-nc-nd/4.0/} to view a copy of this license. For any use beyond those covered by this license, obtain permission by emailing \href{mailto:info@vldb.org}{info@vldb.org}. Copyright is held by the owner/author(s). Publication rights licensed to the VLDB Endowment. \\
\raggedright Proceedings of the VLDB Endowment, Vol. \vldbvolume, No. \vldbissue\ %
ISSN 2150-8097. \\
\href{https://doi.org/\vldbdoi}{doi:\vldbdoi} \\
}\addtocounter{footnote}{-1}\endgroup

\ifdefempty{\vldbavailabilityurl}{}{
\vspace{.3cm}
\begingroup\small\noindent\raggedright\textbf{PVLDB Artifact Availability:}\\
The source code, data, and/or other artifacts have been made available at \url{\vldbavailabilityurl}.
\endgroup
}
\section{Introduction}
\label{sec:intro}
Vector search has emerged as a cornerstone of modern data management, powering applications such as search engines, recommender systems, and multimodal AI pipelines~\cite{pan2024survey, han2023comprehensive, jing2024large}. State-of-the-art approximate nearest neighbor (ANN) methods—including FAISS~\cite{douze2024faiss}, HNSW~\cite{malkov2018efficient}, ScaNN~\cite{avq_2020}, and DiskANN~\cite{simhadridiskann}—deliver efficient search in high-dimensional spaces. Existing work has also studied how to efficiently retrieve top results $k$ that satisfy filtering conditions over query predicates~\cite{li2025sieve, patel2024acorn,acorn1,acorn2, acorn3,vldbpanel}. A comprehensive survey can be found in ~\cite{huang2025survey}. 

This work initiates the study of \emph{multi-attribute group fairness} in $k$-nearest neighbor ($k$-NN) search over vector databases. Prior work on vector search has largely focused on efficiency~\cite{huang2015query} and predicate-based filtering~\cite{patel2024acorn,li2025sieve,acorn1,acorn2,acorn3}. In contrast, fairness-aware retrieval requires enforcing \emph{count constraints} that guarantee proportional representation of groups defined by protected attributes. When fairness constraints span multiple attributes (e.g., gender, race, and age), they must be satisfied independently for each attribute, substantially increasing the problem complexity. Despite its growing importance, the problem of \emph{satisfying multi-attribute group fairness constraints} in $k$-NN retrieval remains largely unexplored. Yet, many high-stakes applications—including hiring, university admissions, media search, and personalized recommendation—require top-$k$ results that simultaneously preserve relevance while respecting demographic balance and diversity mandates.

\vspace{-0.1in}
\begin{example}[Fair $k$-NN Query on Bus Driver Images]
\label{example:bus-driver-attributes}
\smallskip \noindent {\bf Database.} Consider a vector database that contains images of public school bus drivers. Each image is embedded into a high-dimensional vector space using a deep neural network with $3$ protected attributes. \\
\textbf{Gender:} Male, Female, Non-binary. \\
 \textbf{Race/Ethnicity:} White, Black, Hispanic, Asian, Indigenous, Mixed. \\
\textbf{Age Group:} $<$30, 30--50, $>$50. \\
 
\noindent {\bf Query.} Suppose a parent submits a query image $\mathcal{Q}$ of a bus driver they believe drives their child’s bus. The system must return the $k=8$ most similar images.  In standard $k$-NN retrieval, the $8$ closest neighbors are selected purely by similarity. However, district policy requires \emph{fairness constraints} ensuring proportional representation across three protected attributes: gender, race, and age group. For example:  
\textbf{Gender:} 3 male, 3 female, 2 non-binary.  
\textbf{Race/Ethnicity:} 4 White, 2 Black, 2 Hispanic.  
\textbf{Age group:} 2 under $30$, 3 between $30$--$50$, 3 over $50$.  

Without fairness-aware selection, the retrieved results may be dominated by majority groups (e.g., all older white male drivers), thereby failing to reflect the actual diversity in the database. 
By enforcing fairness-aware $k$-NN, the result set remains both relevant (close to the query) and socio-demographically proportionate.
\end{example}

We initiate the study of \MAtFair as follows: given a query vector $\mathcal{Q}$ and fairness requirements across multiple protected attributes, return the set of $k$ nearest neighbors satisfying those requirements while minimizing the total distance to $\mathcal{Q}$. Practical applications of this problem include search engines aiming for demographically balanced results, e-commerce systems enforcing brand or category diversity, and content recommendation platforms ensuring proportional representation across genres or sources. We investigate three variants of the problem: {\bf 1-Fair-KNN}, which considers a single protected attribute; {\bf 2-Fair-KNN}, which extends to two protected attributes; and {\bf 3+-Fair-KNN}, which generalizes to three or more protected attributes. We prove that the satisfiability problem of \MAtFairThree itself is (strongly) NP-hard through a reduction from the 3-dimensional matching (3DM) problem~\cite{islam2022satisfying}.

Our proposed algorithmic framework operates in two main stages.  \textbf{Preprocessing:} We preprocess the vector database and design two complementary indexing structures: one optimized for efficiently enforcing fairness constraints, and the other for accelerating nearest neighbor search. \textbf{Query processing:} Given a query vector $\mathcal{Q}$, the algorithm first identifies all relevant partitions and efficiently retrieves {\em near neighbor candidates} from each. A subsequent {\em post-processing} stage enforces group fairness constraints across all specified attribute values simultaneously, ensuring that the final result set satisfies fairness requirements whenever a feasible solution exists within the data.

For ensuring efficient fairness check across any possible combination of protected attributes during query time, we partition the vector database based on the Cartesian product of the protected attribute values and index them.  Using the running example, this will partition the database into $54 = 3 \times 6 \times 3$ partitions. 

For each partition, our next goal is to enable efficient retrieval of near neighbors. Broadly, there are two families of techniques for this task: {\em graph-based indexes~\cite{malkov2018efficient,simhadridiskann,li2025sieve,patel2024acorn,acorn1,acorn2,acorn3} and locality-sensitive hashing (LSH)~\cite{huang2015query,indyk1998approximate,andoni2015practical,pham2022,charikar2002similarity,panigrahy2005entropy}. However, graph-based methods are inherently heuristic, and their performance can degrade significantly when the underlying data is not sufficiently “clusterable”~\cite{li2025sieve}. For this reason, we adopt LSH~\cite{gan2012locality,huang2015query} to support efficient approximate nearest-neighbor (NN) search within each partition. LSH is particularly suitable for high-dimensional settings and offers theoretical guarantees on recall: with enough hash tables, the probability of retrieving the true nearest neighbors can be made arbitrarily high~\cite{gionis1999similarity}.}

Leveraging these properties, we develop \AlgNN, which uses LSH to efficiently retrieve near neighbors from each partition. This structure prunes irrelevant candidates early, eliminating the inefficiencies of naïve post-filtering and ensuring that only candidates with the required attribute values are examined.

We then introduce a post-processing framework, \AlgPP, which selects the final set of $k$ outputs that jointly minimize the total distance to the query while exactly satisfying group-fairness constraints across all protected attributes, whenever a feasible solution exists. {\em Importantly, \AlgPP is agnostic to the underlying retrieval method—whether graph-based or LSH-based—and yields optimal results in either case, as confirmed by our analytical and experimental evaluation.}

Given the aforementioned framework, we present \AlgFairOne along with a polynomial time postprocessing algorithm \AlgPPOne that is designed to solve \MAtFairOne. For two-attribute queries (\MAtFairTwo), we present Algorithm \AlgFairTwo where we model the postprocessing problem \AlgPPTwo as min-cost flow in bipartite graphs, and present a polynomial time solution. For three or more attributes (\MAtFairThree), we design postprocessing algorithm \AlgPPThree as an Integer Linear Programming (ILP)–based method that guarantees exact results on fairness constraints, albeit at higher computational cost. We present analytical studies of the designed solutions.



Extensive experiments on five large-scale vector databases demonstrate that our solutions achieve quality consistent with our theoretical analyses and demonstrate  more than $3x$ to $4x$ speedup than the designed baselines. We also adapt state-of-the-art graph based $k$-NN solutions, such as, {\tt SIEVE}~\cite{li2025sieve} or Filter-DiskANN~\cite{gollapudi2023filtered} to fairness-aware setting, as well as implement fairness baselines. Our experimental evaluation demonstrates that  our proposed indexing and postprocessing algorithms to ensure fairness are generic and can be instantiated over both graph-based and LSH-based indexing paradigms. However, fairness-aware retrieval fundamentally requires dedicated indexing and candidate selection mechanisms, and existing vector search methods are not suitable there as is.

\noindent \textbf{Summary of Contributions.} The key contributions of this work are as follows:
\begin{itemize}[leftmargin=1.5em]
\item \textbf{Problem Formulation \& Analysis:}
We define the \emph{Multi-Attribute Fair $k$-NN Retrieval} (\MAtFair) problem: retrieving $k$ nearest neighbors that minimize the total distance to a query while satisfying group-fairness constraints across multiple protected attributes. We design efficient polynomial algorithms for \MAtFairOne and  \MAtFairOne and \MAtFairTwo, and prove that even finding a feasible solution of \MAtFairThree  (3 or more attributes) is \emph{strongly NP-hard} via a reduction from 3DM.

\item \textbf{Indexing Framework:}
We propose a two-level indexing framework: (i) partitioning the vector database by the Cartesian product of protected attribute values and indexing these partitions using bitmaps for fast fairness-aware filtering, and (ii) applying LSH within each partition for efficient near-neighbor retrieval.

\item \textbf{Query Processing Algorithms:}
At query time, the framework first retrieves near neighbors within each partition and then enforces fairness constraints. We design \AlgNN, an LSH-based retrieval algorithm with theoretical recall guarantees, and three fairness-enforcement algorithms—\AlgFairOne, \AlgFairTwo (min-cost flow), and \AlgFairThree (ILP)—that compute an exactly fair set of $k$ results from the retrieved candidates.

\item \textbf{Experimental Results:}
Experiments on five large-scale vector databases demonstrate that multi-attribute group fairness cannot be reliably achieved by directly adapting existing vector search based methods, such as, {\tt SIEVE}~\cite{li2025sieve} or Filter-DiskANN~\cite{gollapudi2023filtered} and that fairness-aware retrieval requires dedicated indexing and candidate selection mechanisms.
\end{itemize}

\section{Data Model \& Problems}

\subsection{Data Model}
\label{sec:data-model}
\smallskip
\noindent {\bf Protected Attributes.}
A fixed set of the attributes, denoted by $\mathcal{A} = \{A_1, \ldots, A_m\}$ are designated as protected attributes.
These are variables that correspond to legally or ethically sensitive characteristics--such as gender, age, ethnicity, disability, or socioeconomic status--that must not unjustly influence algorithmic outcomes~\cite{dwork2012fairness, barocas2023fairness}.
Each protected attribute $A_j$ has a finite domain of possible values $V_j$. Every record $(x_i, a_i)$ belongs to an intersectional group identified by the Cartesian product $V_1 \times V_2 \times \cdots \times V_m$. These protected attributes are treated as immutable and stored as metadata alongside embeddings in the database schema.

{\bf Vector databases.}
Let $\calD = \{(x_i, a_i)\}_{i=1}^n$ denote a collection of $n$ data points.
Each $x_i \in \mathbb{R}^d$ is a $d$-dimensional embedding vector in a continuous feature space, and $a_i = (a_{i,1}, \ldots, a_{i,m})$ encodes the values of $m$ protected attributes associated with $x_i$.

To ground the discussion, consider a running example where $\mathcal{D}$ is a vector database of bus driver images.
Each image is embedded into a high-dimensional feature space using a vision model and is annotated with protected attributes such as Gender (Male, Female, Non-binary), Race/Ethnicity (White, Black, Hispanic, Asian, Indigenous, Mixed), and Age Group (<30, 30–50, >50).
Table~\ref{tab:bus-driver-db} illustrates a sample of ten records from this database.
Each embedding vector $x_i$ corresponds to a driver’s image, while the tuple $a_i$ captures the demographic and employment attributes that may be relevant to fairness constraints during retrieval.


\smallskip
\noindent {\bf Distance.}
Retrieval operates on a distance function $\norm{\cdot, \cdot}: \bbR^d \times \bbR^d \rightarrow \bbR_{\ge 0}$ to measure the proximity of a data point to the query.
In the context of top-$k$ retrieval, the task reduces to a $k$-NN search, where the goal is to identify a subset of $k$ data points whose total distance to the query is minimized.
Eqn.~\ref{eq:distance-func} shows commonly used distance functions between a query vector $q$ and data point $x$.
\begin{equation}
\label{eq:distance-func}
\norm{x, q} =
\begin{cases}
\sqrt{\sum_{i=1}^d (x_i - q_i)^2} & \text{(Euclidean or $\ell_2$ distance)}\\[4pt]
\sum_{i=1}^d|x_i - q_i| & \text{(Manhattan or $\ell_1$ distance)}\\[4pt]
1 - \cos(x, q) & \text{(Cosine-based distance)}\\[4pt]
\left(\sum_{i=1}^d (x_i - q_i)^p\right)^{1/p} & \text{(Minkowski or $\ell_p$ distance)}
\end{cases}
\end{equation}

\noindent
{\bf $k$-NN Queries.}
Given a query vector $q \in \mathbb{R}^d$, a $k$-nearest neighbor ($k$-NN) query aims to identify a subset of $k$ data points in the dataset whose distances are closest to $q$ according to a specified distance function.
Formally, the query returns the set of $k$ points $\mathcal{N}_k(q) \subseteq \mathcal{D}$ such that for every $x_i \in \mathcal{N}_k(q)$ and $x_j \notin \mathcal{N}_k(q)$, the inequality $\norm{x_i, q} \le \norm{x_j, q}$ holds.

\smallskip

\begin{table}[t]
\centering
\begin{tabular}{c|c|c|c}
\hline
\textbf{ID} & \textbf{Gender} & \textbf{Race/Ethnicity} & \textbf{Age Group} \\ 
\hline
1 & Male & White & 30--50 \\
2 & Female & Black & $<$30 \\
3 & Non-binary & Hispanic & $>$50 \\
4 & Female & Asian & 30--50 \\
5 & Male & Hispanic & $<$30 \\
6 & Female & White & $>$50 \\
7 & Male & Black & 30--50 \\
8 & Non-binary & White & $<$30 \\
9 & Female & Indigenous & $>$50 \\
10 & Male & Mixed & 30--50 \\
\hline
\end{tabular}
\caption{\small Example vector database of bus driver images with protected attributes.}
\label{tab:bus-driver-db}
\end{table}

\subsection{Problem Definition}
{\bf MAFair-KNN.} Fairness constraints can be defined over any non-empty subset of the power set of $\mathcal{A}$, allowing users to impose requirements on one or more protected attributes.

Given a query $\mathcal{Q} = (q, \hat{\beta})$, where $q$ is the query vector and $\hat{\beta}$ encodes the desired group-level count constraints, the goal is to retrieve a subset $S \subseteq \mathcal{D}$ of size $k$ that minimizes total distance to $q$ while simultaneously satisfying fairness requirements:
\begin{equation}
\begin{aligned}
    \min_{S \subseteq \calD, |S| = k} \quad & \sum_{(\bx, \ba) \in S} \norm{x, q}
\end{aligned}
\end{equation}
subject to
\begin{equation}
\label{eq:fairness-constraints}
    \forall A_j \in \mathcal{A}', \forall v \in V_j: \quad \big|\{ (x, a) \in S \mid a_j = v \}\big| = \hat{\beta}_{j,v}
\end{equation}
where $\hat{\beta}_{j,v}$ denotes the required number of samples in $S$ that must have attribute value $v$ for protected attribute $A_j$, and $\sum_{v \in V_j} \hat{\beta}_{j,v} = k$.
This formulation generalizes traditional $k$-NN retrieval to a fairness-aware setting, ensuring that the returned neighbors not only minimize geometric distance but also satisfy attribute-level count constraints across multiple protected attributes.

\begin{itemize}[leftmargin=1.5em]
    \item {\MAtFairOne ($m=1$):} 
    When fairness is imposed on a single protected attribute $A$, the constraints apply only over its domain $V$, ensuring proportional representation across its groups. 
 \item {\MAtFairTwo ($m=2$):} 
    When two protected attributes $(A_i, A_j)$ are considered, the constraints apply to the domains of $V_i$ and $V_j$ independently. 
\item {\MAtFairThree($m \ge 3$):} 
    For $3$ or more protected attributes, the query answer needs to satisfy the fairness constraints on each of the attributes independently.
  \end{itemize}
Each variant exhibits distinct computational complexity and therefore requires a different algorithmic treatment.

\noindent Returning to the bus driver database in Table~\ref{tab:bus-driver-db}, suppose a parent submits a query image $q$ representing a driver. The task is to return $k = 5$ most similar images while enforcing fairness constraints over Gender, Race, and Age Group. The required counts are specified as:
\begin{itemize}[leftmargin=1.5em]
    \item Gender: $\hat{\beta}_{\text{Male}} = 1, \hat{\beta}_{\text{Female}} = 1, \hat{\beta}_{\text{Non-binary}} = 3$
    
    \item Race: $\hat{\beta}_{\text{White}} = 2, \hat{\beta}_{\text{Black}} = 1, \hat{\beta}_{\text{Hispanic}} = 2$
    
    \item Age: $\hat{\beta}_{\text{<30}} = 2, \hat{\beta}_{\text{30--50}} = 2, \hat{\beta}_{\text{>50}} = 1$
\end{itemize}
A feasible result of 5 data points must contain exactly 1 male, 1 female, and 3 non-binary images; 2 White, 1 Black, and 2 Hispanic individuals; and a 2--2--1 distribution across age groups.

Following~\cite{islam2022satisfying} we prove the following theorem. 
\begin{theorem}
{\bf 3+-Fair-KNN}: The problem of deciding the feasibility of a general instance of the 3-attribute case (and thus any $m\geq 3$ attributes as well) is strongly NP-hard.
\end{theorem}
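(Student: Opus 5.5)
We reduce from 3-Dimensional Matching (3DM), which is strongly NP-complete. Because 3DM has no numerical parameters, every number in the instance we build is bounded by a polynomial in the input size, so the reduction also establishes strong NP-hardness.

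\textbf{Source instance.} A 3DM instance consists of three disjoint sets $X, Y, Z$ with $|X|=|Y|=|Z|=q$ and a family $T \subseteq X \times Y \times Z$ of triples. The question is whether some subfamily $M \subseteq T$ with $|M| = q$ covers every element of $X \cup Y \cup Z$ exactly once.

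\textbf{Construction.} We build a feasibility instance of \MAtFairThree with $m = 3$ protected attributes $A_1, A_2, A_3$ and domains $V_1 = X$, $V_2 = Y$, $V_3 = Z$.
\begin{itemize}[leftmargin=1.5em]
\item For each triple $t = (x,y,z) \in T$ we create one record $(x_t, a_t)$ with $a_t = (x, y, z)$.
\item The embedding $x_t$ is arbitrary, say the zero vector, since distances play no role in feasibility.
\item We set $k = q$ and $\hat{\beta}_{1,v} = \hat{\beta}_{2,v} = \hat{\beta}_{3,v} = 1$ for every value $v$ in the corresponding domain.
\end{itemize}
This satisfies $\sum_{v\in V_j}\hat{\beta}_{j,v} = q = k$ for each $j$. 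The instance has $|T|$ records, $3q$ attribute values and unit counts, so it is built in polynomial time and all numbers are unary-bounded.

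\textbf{Equivalence.} We check both directions of Eq.~\ref{eq:fairness-constraints}.
\begin{itemize}[leftmargin=1.5em]
\item Suppose $S$ is feasible, with $|S| = q$. Each $x \in X$ appears as the $A_1$-value of exactly one record of $S$, and likewise for $Y$ and $Z$. The triples corresponding to $S$ therefore form a perfect 3-dimensional matching.
\item Conversely, a perfect matching $M$ gives the $q$ corresponding records. Every value in every domain then appears exactly once, so all count constraints hold.
\end{itemize}
Parallel triples in $T$ are not an issue, since each triple yields a distinct record.

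\textbf{Extension to $m \ge 3$.} We pad with additional attributes $A_4, \ldots, A_m$, each with a single-value domain $V_j = \{\ast\}$. Every record takes value $\ast$ on these attributes, and we set $\hat{\beta}_{j,\ast} = k$. These constraints hold for every size-$k$ set, so feasibility is unchanged.

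\textbf{Main point of care.} The only delicate step is confirming that the "exactly $\hat\beta$" semantics, together with $|S| = k = q$, forces each element to be covered exactly once rather than at least once. This follows directly from the unit counts.

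Membership in NP is immediate by verifying the counts of a candidate $S$. The problem is therefore strongly NP-complete, and in particular strongly NP-hard.
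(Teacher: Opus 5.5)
Your proposal is correct and uses the same reduction from 3DM as the paper: one record per triple with attribute tuple $(x,y,z)$, all count targets equal to $1$, $k=q$, an equivalence argued in both directions, and strong hardness inherited from the fact that 3DM has no numerical parameters. Your padding with single-valued attributes $A_4,\ldots,A_m$ and your NP-membership remark spell out the extension to $m\ge 3$, which the paper only asserts in passing.
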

\begin{proof}
Recall the 3-Dimensional Matching (3DM) decision problem: given three pairwise-disjoint sets $X,Y,Z$ with $|X|=|Y|=|Z|=k$ and a set $T\subseteq X\times Y\times Z$ of triples, does there exist a perfect matching $M\subseteq T$ (of size $k$) such that no two triples in $M$ share an element? 3DM is NP-complete and contains no numerical parameters, implying strong NP-completeness.

\smallskip
\emph{Reduction construction.}
From an instance $(X,Y,Z,T)$ of 3DM, build a \emph{feasibility} instance of {\bf 3+-Fair-KNN} as follows.

\begin{itemize}[leftmargin=1.5em]
\item \textbf{Attributes.} Create three protected attributes $A_1,A_2,A_3$ with domains $V_1\!=\!X$, $V_2\!=\!Y$, $V_3\!=\!Z$.
\item \textbf{Records.} For every triple $(x,y,z)\in T$, create one database record $(\mathbf{x}_{(x,y,z)}, a)$ whose protected-attribute tuple is $a=(x,y,z)$. (The embedding $\mathbf{x}_{(x,y,z)}$ and distances are arbitrary; feasibility ignores costs.)
\item \textbf{Fairness counts and $k$.} For each $x\in X$, set $\hat\beta_{1,x}=1$; for each $y\in Y$, set $\hat\beta_{2,y}=1$; for each $z\in Z$, set $\hat\beta_{3,z}=1$. This exactly matches the count-constraint template in Eq.~\ref{eq:fairness-constraints}.
\end{itemize}

This reduction is clearly polynomial in $|X|+|Y|+|Z|+|T|$.

\smallskip
\emph{Correctness.}
We show the 3DM instance is a YES-instance iff the constructed {\bf 3+-Fair-KNN} instance is feasible.

($\Rightarrow$) Suppose there exists a perfect 3D matching $M\subseteq T$ of size $k$. Select $S=\{\mathbf{x}_{(x,y,z)}:(x,y,z)\in M\}$. Because elements in $M$ do not share elements, each $x\in X$ appears in exactly one chosen triple, and likewise for $Y$ and $Z$. Therefore, the per-attribute counts satisfy $\hat\beta_{1,x}=\hat\beta_{2,y}=\hat\beta_{3,z}=1$ for all elements, and $|S|=k$. Hence $S$ is a feasible solution. Note that distances are not considered in fairness feasibility check.

($\Leftarrow$) Conversely, suppose there exists a feasible set $S$ of size $k$ satisfying the fairness counts. Each selected record in $S$ corresponds to some triple $(x,y,z)\in T$ and contributes one unit of count to exactly one $x$, one $y$, and one $z$. Since each count target is $1$ and $|S|=k$, no domain value (in $X$, $Y$, or $Z$) can appear twice among the selected records. Therefore, the set of corresponding triples is a family of $k$ mutually disjoint triples — a valid perfect 3D matching.

\smallskip
\emph{Strong NP-hardness.}
Since 3DM is known to be strongly NP-hard due to its combinatorial structure (without numerical parameters), and the above reduction preserves this property, the resulting feasibility problem for {\bf 3+-Fair-KNN} is strongly NP-hard.

\end{proof}

\section{Proposed Framework}
We propose a computational framework that unifies approximate similarity search with fairness-aware optimization. It combines attribute-partitioned indexing, LSH-based indexing, and post-processing algorithms to guarantee fairness across multiple protected attributes. At a high level, the framework operates in two main stages.  

\noindent\textbf{Stage 1: Preprocessing and Indexing.}
The dataset is partitioned by the Cartesian product of protected attribute values in $\mathcal{A}$. Each partition is indexed using locality-sensitive hashing (LSH)~\cite{indyk1998approximate,datar2004locality,huang2015query,jafari2021survey}, enabling efficient approximate nearest-neighbor search while enforcing protected-attribute constraints. Preprocessing is discussed in Section~\ref{sec:preprocess}.

\noindent\textbf{Stage 2: Query Processing.}
Query answering consists of 2 steps:
\begin{enumerate}[leftmargin=1.5em]
    \item \emph{Candidate Retrieval:} Retrieve near neighbors from partitions satisfying the fairness constraints $\hat{\beta}$ (Section~\ref{sec:qpnear}).
    \item \emph{Fair Selection:} Select exactly $k$ items that (i) minimize total distance to the query and (ii) satisfy per-attribute fairness constraints.
    The fair selection step is formulated as an optimization problem; solutions are presented in Section~\ref{sec:qppostprocess}.
\end{enumerate}

\subsection{Preprocessing}\label{sec:preprocess}
The process begins by partitioning the dataset according to the Cartesian product of all protected attribute values in $\mathcal{A}$. This results in multiple subgroups, each representing a unique combination of protected attributes (e.g., gender, race, or age categories). Such partitioning ensures that subsequent retrieval or analysis can explicitly respect and enforce fairness constraints tied to any combination of these attributes. Within each subgroup, the data is further indexed using Locality Sensitive Hashing (LSH), which provides an efficient mechanism for approximate nearest neighbor retrieval in high-dimensional vector spaces. By combining attribute-based partitioning with LSH-based indexing, the system enables both constraint-aware access to specific demographic subsets and scalable, similarity-preserving retrieval of vector representations.

\subsubsection{Attribute-Based Partitioning}
\label{sec:partition}
Recall that we have fixed set of $m$ protected attributes $\mathcal{A} = \{A_1, \ldots, A_m\}$ with corresponding domains $\{V_1, \ldots, V_m\}$.
The dataset $\mathcal{D}$ is partitioned along the Cartesian product of attribute values:
\begin{equation}
\label{eq:all-partitions}
\Pi = V_1 \times V_2 \ldots \times V_m
\end{equation}
Each partition $\pi \in \Pi$ represents an \emph{intersectional subgroup}, e.g., (Female, Asian, Age 30--50) and its subset of points is denoted as $\mathcal{D}_\pi = \{(x, a) \ | \ a = \pi\}$.
Within each partition, a separate LSH index is built using the same procedure described above.
Partitioning reduces search space by providing direct access to relevant attributes defined by query predicates and avoid searching in non-relevant partitions, thereby supporting query with multi-attribute fairness constraints while maintaining efficiency.
Although the total number of partitions may be large, each data point belongs to exactly one partition; hence, no duplication occurs and the overall storage space remains $O(n)$, ensuring reasonable space complexity.

\subsubsection{Bitmap Representation of Partitions}
\label{sec:partition-bitmap}
To efficiently represent and manipulate the partitions, we adopt a compact bitmap encoding scheme for each partition.
Each protected attribute $A_j$ with domain size $|V_j|$ is assigned a fixed bit range of $\lceil\log_2{|V_j+1|}\rceil$ bits within the overall bitmap (the added $1$ ensures absence of any value for a protected attribute). 
Each possible value $v \in V_j$ is mapped to a unique binary code within that range.
A partition $\pi = (v_1, v_2, \ldots, v_m)$ can therefore be represented by concatenating the binary codes of its constituent attribute values, forming a compact binary vector $b_\pi$.
This representation allows efficient bitwise operations for partition selection during query processing.

Running Example~\ref{example:bus-driver-attributes}: $A_1 = \text{Gender}, V_1 = \{\text{Male}, \text{Female}, \text{Non-binary}\}, \\ \lceil\log_2{|V_1+1|}\rceil = 2$. Hence, the attribute values can be encoded using $2$ bits: $\text{Male} = 01, \text{Female} = 10, \text{Non-binary} = 11, \\ \text{Absence of a gender value} = 00$. \\
 Similarly, for $A_2 = \text{Race/Ethnicity}$: \text{White} = 001, \text{Black} = 010,\\ \text{Hispanic} = 011, \text{Asian} = 100, \text{Indigenous} = 101, \text{Mixed} = 110$, \\\text{Absence of a race value} = 000$. \\
$A_3 = \text{Age Group}$: $\text{<30} = 01, \text{30--50} = 10, \text{>50} = 11, \\ \text{Absence of an age value} = 00$. \\
Partition $\pi = (\text{Female}, \text{Hispanic}, \text{30--50})$ can be represented by the bitmap $b_\pi = \boxed{10}\boxed{011}\boxed{10} = 1001110$.

\subsubsection{Locality-Sensitive Hashing}
\label{sec:lsh}
Given two vectors $x, y \in \mathbb{R}^d$, a hash function $h$ is $(R, cR, p_1, p_2)$-sensitive for a distance metric $\norm{\cdot, \cdot}$ if the following two conditions hold:
\begin{gather}
\begin{cases}
    \mathbb{P}\left[ h(x) = h(y) \right] \geq p_1, \quad \textrm{if } \norm{x, y} \leq R
    \\
    \mathbb{P}\left[ h(x) = h(y) \right] \leq p_2, \quad \textrm{if } \norm{x, y} > cR
\end{cases}
\end{gather}
where $c > 1$ is an approximation factor, $p_1$ and $p_2$ are near and far collision probabilities and $p_1 > p_2$.
Intuitively, two points $x$ and $y$ are projected to the same bucket (\emph{i.e.} they collide) with probability at least $p_1$ if the distance between them is less than $R$. Two far away points with distance greater than $cR$ can still be projected in the same bucket but with probability at most $p_2$.
LSH schemes differ by the underlying metric:
\begin{itemize}[leftmargin=1.5em]
    
    \item {\bf $\ell_2$ distance}: under $\ell_2$ distance for points in $R^d$, $p$-stable LSH~\cite{datar2004locality} proposed using the base hash
    \begin{equation}
        h(x) = \left\lfloor \frac{a \cdot x + b}{w} \right\rfloor
    \end{equation}
    where $a \sim \mathcal{N}(0, I_d)$ is a random vector and $b$ is uniformly drawn from $[0, w)$. 

    \item {\bf Arccos (angular distance):} pick a random unit length vector $u \in \mathbb{R}^d$, the base hash function under angular metric is defined by~\cite{charikar2002similarity}:
    \begin{equation}
        h(x) = \text{sign}(u \cdot x)
    \end{equation}
    Under this distance metric, the collision probability is $\mathbb{P}[h(x) = h(y)] = 1 - \frac 1\pi
    \arccos(\frac{x \cdot y}{\norm{x} \cdot \norm{y}})$
\end{itemize}

Following standard practice in $p$-stable LSH~\cite{datar2004locality}, 
we amplify selectivity by concatenating $\mu$ base hash functions to form a compound hash function $g(x) = (h_1(x), \ldots, h_\mu(x))$.  This compound function maps each data point $x$ to a hash key $g(x)$ that 
uniquely identifies a bucket within a hash table.  

To further reduce false negatives, we construct $\ell$ independent hash tables $T_1, T_2, \ldots, T_\ell$, each associated with its own compound hash function 
$g_j(x)$.  
Consequently, each data point $x_i$ is inserted into $\ell$ different buckets, 
one per table, according to the following standard insertion rule:
\begin{equation*}
    \text{For table } j = 1, \ldots, \ell: \quad 
    \text{insert } (g_j(x_i), x_i) \text{ into bucket } T_j[g_j(x_i)].
\end{equation*}
The preprocessing procedure is illustrated in Figure~\ref{fig:preprocessing}.
We follow this standard LSH setup to ensure a good balance between retrieval efficiency and accuracy.


\subsubsection{Indexing Cost Analysis}
Let $\Pi = \{\pi\}$ be the set of partitions and $n_\pi = |D_\pi|$ be the number of points in partition $\pi$.
Each partition $\pi$ builds $\ell$ hash tables, and each table uses a compound key obtained by concatenating $\mu$ hash functions (see Section~\ref{sec:lsh} and Figure~\ref{fig:preprocessing}).
Each base hash costs $O(d)$ time; computing one compound key thus costs $O(\mu d)$.
Since each data point must be inserted in to all $\ell$ tables in a partition, the per-point indexing cost is $O(\ell \mu d)$.
Therefore, the total indexing time over all partitions is $O\left( n \ell \mu d \right)$.

In terms of storage, each partition $\pi$ costs $O(n_\pi \ell\mu)$, hence the total space complexity across all partitions is $O\left(n \ell \mu\right)$.

\begin{figure}[t]
    \centering
    \includegraphics[width=1.0\linewidth, trim=0.7cm 18.5cm 0.65cm 1.25cm, clip]{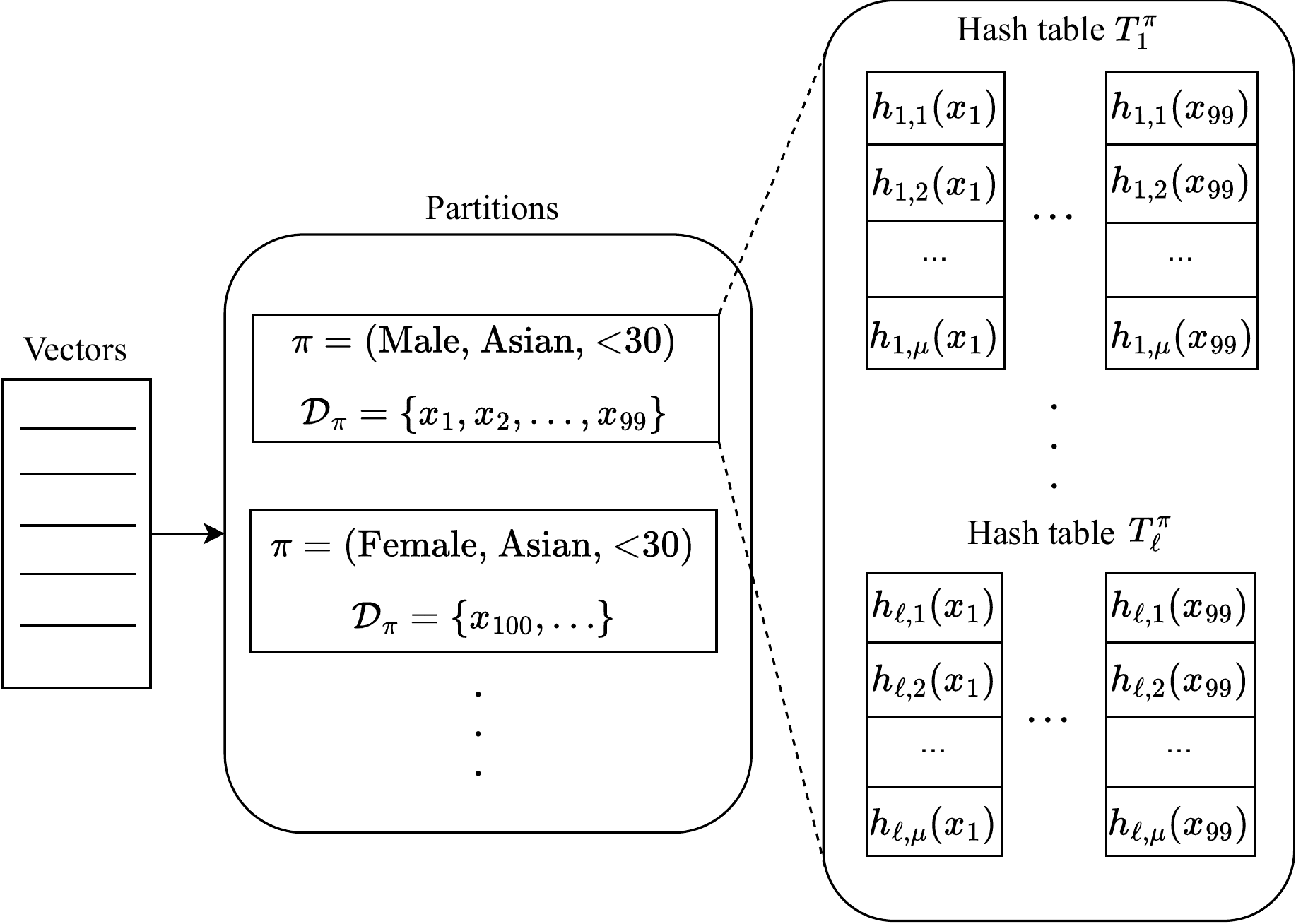}
    \caption{\small Preprocessing vectors: partitioning by Cartesian product of attribute values and locality-sensitive hashing. For each partition $\pi$, we employ $\ell$ concatenated hash functions, each of length $\mu$, to form $\ell$ hash tables.}
    \label{fig:preprocessing}
\end{figure}



\section{Query Processing Algorithms}
\label{sec:query-processing}
Query processing proceeds in two steps.  
In the first step (Section~\ref{sec:qpnear}), given a query $\mathcal{Q} = (q, \hat{\beta})$ with fairness constraints $\hat{\beta}$, Algorithm \AlgNN identifies partitions $\pi \in \Pi$ whose protected-attribute combinations are required by $\hat{\beta}$ and retrieves the sufficient number of {\em near neighbor candidates from each}. 
In the second step (Section~\ref{sec:qppostprocess}), all candidates retrieved from the relevant partitions are passed to a  \emph{post-processing algorithms (\AlgPPOne, \AlgPPTwo, \AlgPPThree)}, to select $k$ final results that (i) minimize the total  distance to the query vector and (ii) satisfy the fairness constraints  across all protected attributes as specified in $\hat{\beta}$. A key novelty of the postprocessing algorithms is that they form an optimal, stand-alone framework: given any pool of retrieved near-neighbor candidates from all relevant partitions, the algorithms produce solutions that are provably optimal with respect to that pool. In Section~\ref{sec:qpanalysis}, we present qualitative analyses of the designed algorithms.

\subsection{Retrieving Near Candidates}\label{sec:qpnear}
We present Algorithm \AlgNN for retrieving {\em near} candidates of $q$ from  a $\pi \in \Pi$, where $\pi$ represents a partition containing $\hat{\beta}$.  \AlgNN has a subroutine \AlgNNP  that retrieves a candidate set $C_{\pi}$ with number of candidates $k_{\pi}$ from partition $\pi$ ($k_{\pi} \leq k$). After running \AlgNN for all constraints in $\hat{\beta}$, a total candidate set $\mathcal{C}= \bigcup_{\forall \pi} C_{\pi}$ is produced.

Using Example~\ref{example:bus-driver-attributes}, {\em male-white-under 30} represents one such partitions, from where $C_{\pi}$ near neighbors are retrieved by \AlgNNP.

\subsubsection{Identifying Relevant Partitions via Bitmap Matching}
Recall that for a query $\mathcal{Q} = (q, \hat{\beta})$, the fairness constraints $\hat{\beta}_{j,v} \in \hat{\beta}$ specifies count constraints  of a protected attribute value $v$ of a protected attribute $A_j \in \mathcal{A}$. 
\AlgNN computes the Cartesian product across all protected attributes 
$A_j$ that appear in the fairness specification $\hat{\beta}$, producing 
a collection of partitions $\Pi_{\hat{\beta}}$. Each partition $\pi_q \in \Pi_{\hat{\beta}}$ corresponds to a unique combination of protected attribute values. For example, if $\hat{\beta}$ specifies count constraints over both 
\emph{Gender} and \emph{Race}, then $\Pi_{\hat{\beta}}$ consists of one partition  for each pair $(v_{\text{Gender}}, v_{\text{Race}})$ obtained from the Cartesian product of their respective value sets. 
\AlgNN then applies the same bitmap-based representation to every such partition. Since Age Group is unspecified, it gets the value of $00$. This results in a $\{b_{Q_1}, b_{Q_2}, \ldots\, b_{Q_{\Pi_{\hat{\beta}}}}\}$ bitmap representations of $\hat{\beta}$ for a $\mathcal{Q}$.

To determine which partitions should be probed, \AlgNN takes each preprocessed  partition bitmap $b_\pi$ and first applies an attribute mask $M_Q$ that zeros out all protected-attribute positions not constrained by the query $b_{Q_i}$. It then performs a bitwise \textsc{XOR} operation $(\oplus)$ between $b_{Q_i}$ and 
the masked partition bitmap. A partition $\pi$ is considered relevant if there exists  one $b_{Q_i}$ such that
\[
((b_\pi {\wedge M_Q}) \oplus b_{Q_i}) = 0.
\]
This condition guarantees that the partition contains all attribute 
values required by at least one fairness-constrained group. The set of 
relevant partitions is thus:
\begin{equation}
\label{eq:bitmap-relevant}
\Pi_{\hat{\beta}} = 
\{\pi \mid \exists\, b_{Q_i} \in \mathcal{B}_Q,\ 
(b_\pi \oplus b_{Q_i}) = 0\}.
\end{equation}
These partitions are subsequently probed in the retrieval stage, ensuring 
that group-level fairness constraints are enforced without sacrificing 
query efficiency.
\begin{example}
\label{example:query-bitmap}
    Consider a query with fairness constraints:
    \begin{equation}
        \nonumber
        \hat{\beta} = 
        \left\{
        \begin{array}{c}
        A_1 =\text{Gender: } \hat{\beta}_{\text{Male}} = 2, \hat{\beta}_{\text{Female}} = 3 \\[4pt]
        A_2 = \text{Race: } \hat{\beta}_{\text{Hispanic}} = 4, \hat{\beta}_{\text{White}} = 1\\
        \end{array}
        \right\}
    \end{equation}
    The corresponding set of query bitmaps is:
    \begin{equation}
        \nonumber
        \mathcal{B}_Q = 
        \left\{
        \begin{array}{c}
        b_{(\text{Male, Hispanic, \_\_})}\\
        b_{(\text{Female, Hispanic, \_\_})}\\
        b_{(\text{Male, White, \_\_})}\\
        b_{(\text{Female, White, \_\_})}
        \end{array}
        \right\}
        =
        \left\{
        \begin{array}{c}
        00010\_ \, \_\\
        01010\_ \, \_\\
        00000\_ \, \_\\
        01000\_ \, \_
        \end{array}
        \right\}
    \end{equation}
    Since $\hat{\beta}$ does not specify any fairness constraint on the protected attribute $A_3 = \text{Age Group}$, the last two bits in each query bitmap remain unassigned (shown as underscores), allowing for any age group value to be considered relevant.
\end{example}

\subsubsection{Subroutine \AlgNNP}
Given a partition $\pi$, Subroutine \AlgNNP retrieves candidate set $C_{\pi}$ with  $k_{\pi}$ number of candidates from partition $\pi$. 

Recall that partition $\pi$ contains $\ell$ hash tables $\{T_j^\pi\}_{j=1}^\ell$, constructed using compound LSH functions $g_j(\cdot)$ as defined previously. Given a query vector $q \in \mathbb{R}^d$, the algorithm first computes, for every table $T_j$, the compound hash key $g_j(q) = (h_{j, 1}(q), \ldots, h_{j, \mu}(q))$. These keys identify the buckets in which potential neighbors of $q$ are likely to reside.

For each partition $\pi$, the algorithm probes the corresponding buckets $T^\pi_{j}[g_j(q)]$ across all tables and aggregates the collision points to form the candidate set $C_\pi = \bigcup_{j=1}^\ell T_j^\pi[g_j(q)]$.
Each candidate in $C_\pi$ is then evaluated by computing its distance to query, and the results are ranked according to distance.
Algorithm~\ref{alg:retrieve-near-points} describes the procedure of retrieving $k_\pi$ near points from a partition $\pi$.
When a partition $\pi$ does not contain enough points to satisfy $k_\pi$, we return all available candidates from $\pi$ as the final result for that partition.

\begin{algorithm}[t]
\caption{Subroutine: \AlgNNP retrieve $k_{\pi}$ points in partition $\pi$}
\label{alg:retrieve-near-points}
\begin{algorithmic}[1]
\Require query $q\in\mathbb{R}^d$; hash tables $\{T_j^\pi\}_{j=1}^{\ell}$ with compound hash functions $g_j(\cdot)$; target $k_\pi$
\Ensure $k_\pi$ near points of $q$
\State $C_\pi \gets \varnothing$ \Comment{candidate set for partition $\pi$}
\For{$j \gets 1$ \textbf{to} $\ell$}
    \State $q_h \gets g_j(q)$ \Comment{compound hash key}
    \State $C_\pi \gets C_\pi \cup T_j^\pi[q_h]$ \Comment{gather collision points}
\EndFor
\State keep $k_\pi^*$ points in $C_\pi$ $(k^*_\pi > k_\pi)$, remove the rest
\State compute $\norm{x,q}$ for all $x \in C_\pi$, sort ascending
\State \Return first $k_\pi$ elements of $C_\pi$
\end{algorithmic}
\end{algorithm}
\subsubsection{Size of $k_{\pi}$}
\label{sec:k_pi}
Recall that $\hat{\beta}_{j, v}$ denote the required count for value $v$ of attribute $A_j$. For a relevant partition $\pi = (v_1, v_2, \ldots, v_m)$ $\in \Pi_{\hat{\beta}}$, then the retrieval quota for $\pi$ is:
\begin{equation}
\label{eq:k_pi}
    k_\pi = \min_{A_j \in \mathcal{A}} \hat{\beta}_{j, v_j} \quad 
\end{equation}

\begin{example}
We continue Example~\ref{example:query-bitmap}.
Any partition having attribute values Male and White--for instance $\pi = (\text{Male, White, <30})$--is relevant to the query since it can match $b_{(\text{Male, White, \_\_})}$.
By Eq.~\ref{eq:k_pi}, $k_\pi = \min (\hat{\beta}_{\text{Male}}, \hat{\beta}_{\text{White}}) = \min(2, 1) = 1$.
\end{example}

The proofs could be found in our Technical Report~\cite{techreport}.
\begin{theorem}[Number of Hash Tables in \AlgNNP]
\label{theorem:ell}
   Given an $(R, cR, p_1, p_2)$-sensitive hash family, let $n_\pi$ denote the number of points in partition $\pi$ and $\mu_\pi$ the concatenation length, where 
\[
\mu_\pi = \frac{\log n_\pi}{\log(1/p_2)} 
\quad \text{and} \quad 
\rho = \frac{\log(1/p_1)}{\log(1/p_2)}
\]
\cite{datar2004locality}.  
Let $K$ be the maximum number of near points to be retrieved for any query with success probability at least $1 - \frac{\delta}{2}$.  
Then the required number of hash tables for partition $\pi$, denoted $\ell_\pi$, must satisfy:
    \begin{equation}
    \label{eq:ell}
        \ell_\pi \ge n_\pi^{\rho} \log(2K/\delta)
    \end{equation}
\end{theorem}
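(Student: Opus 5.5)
The argument is the standard $(R,cR,p_1,p_2)$ LSH amplification bound, followed by a union bound over the $K$ near points. Since each point lies in exactly one partition, the analysis is done for a single partition $\pi$ with its own $n_\pi$, $\mu_\pi$, and $\ell_\pi$. The requirement to establish is this: every one of the (at most $K$) near points $x$ with $\norm{x,q}\le R$ must collide with $q$ in at least one of the $\ell_\pi$ tables, except with total failure probability at most $\delta/2$.

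\textbf{Step 1: single-table collision probability.} Fix one near point $x$. The compound key $g_j$ concatenates $\mu_\pi$ independent base hashes, so
\[
\mathbb{P}[g_j(x)=g_j(q)] \ge p_1^{\mu_\pi}.
\]
Substituting $\mu_\pi = \log n_\pi/\log(1/p_2)$ gives
\[
p_1^{\mu_\pi} = \exp\!\Bigl(-\tfrac{\log(1/p_1)}{\log(1/p_2)}\log n_\pi\Bigr) = n_\pi^{-\rho}.
\]

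\textbf{Step 2: failure over all tables.} The $\ell_\pi$ tables use independent compound functions. The probability that $x$ collides with $q$ in none of them is therefore at most
\[
(1-n_\pi^{-\rho})^{\ell_\pi} \le \exp(-\ell_\pi n_\pi^{-\rho}),
\]
using $1-t\le e^{-t}$.

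\textbf{Step 3: union bound over near points.} Apply a union bound over the at most $K$ near points. The probability that some near point is missed is at most
\[
K\exp(-\ell_\pi n_\pi^{-\rho}).
\]
We need this to be at most $\delta/2$, which rearranges to
\[
\ell_\pi \ge n_\pi^{\rho}\log(2K/\delta),
\]
exactly Eq.~\ref{eq:ell}. I would also note that the choice of $\mu_\pi$ keeps the expected number of far points ($\norm{x,q}>cR$) per table at most $n_\pi p_2^{\mu_\pi}=1$. That bound is what justifies truncating to $k^*_\pi$ candidates. It is presumably where the other $\delta/2$ of the failure budget is spent (via Markov), but it is not needed for the stated lower bound on $\ell_\pi$.

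\textbf{Main obstacle.} The delicate point is interpreting the statement, not the calculation. The theorem phrases the requirement as a necessary condition ("must satisfy"), while the natural argument above shows sufficiency. I would present it as: this choice of $\ell_\pi$ suffices, and it is the smallest value for which the bound $K\exp(-\ell_\pi n_\pi^{-\rho})\le\delta/2$ certifies success.

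A second issue is that $\mu_\pi$ must be an integer. Rounding it up to $\lceil\mu_\pi\rceil$ changes $p_1^{\mu_\pi}$ by at most a factor $1/p_1$, a constant absorbed into $\ell_\pi$. I would state this caveat explicitly.
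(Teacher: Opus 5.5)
Your proposal is correct and is the standard argument the paper's bound comes from. You use the per-table collision bound $p_1^{\mu_\pi}=n_\pi^{-\rho}$, independence across the $\ell_\pi$ tables to get a miss probability of at most $\exp(-\ell_\pi n_\pi^{-\rho})$ per near point, and a union bound over the $K$ near points with failure budget $\delta/2$, which rearranges to Eq.~\ref{eq:ell} (with $\log$ read as the natural logarithm). Your other remarks also match how the paper uses the result: ``must satisfy'' is really the sufficient condition this bound certifies, and the other $\delta/2$ goes to the Markov bound on false positives in Theorem~\ref{theorem:k-star}.
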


Another failure mode arises when the retrieval set is dominated by false positives—points whose distance from the query exceeds $cR$.  
Since retrieving exactly $k_\pi$ points may not guarantee the presence of $k_\pi$ true near neighbors, it is necessary to examine more than $k_\pi$ candidates.  In this case, we bound the expected number of false positives and determine the minimum number of retrieved points required to ensure that at least $k_\pi$ near points are included.

\begin{theorem}[Expected Number of False Positives.]
\label{theorem:k-star}
    With probability at least $1 - \frac{\delta}{2}$, there are at most $\left\lceil \frac{2\ell_\pi}{\delta}\right\rceil$ false positives in the retrieval set. Hence, the minimum number of points to guarantee at least $k_\pi$ near point is given by:
    \begin{equation}
    \label{eq:k-star}
        k^*_\pi = k_\pi + \left\lceil\frac{2 \ell_\pi}{\delta}\right\rceil
    \end{equation}
\end{theorem}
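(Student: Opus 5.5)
The plan is to follow the standard analysis of $p$-stable LSH \cite{datar2004locality}. First bound the expected number of false positives (points at distance more than $cR$ from $q$) that land in the probed buckets. Then apply Markov's inequality with failure budget $\delta/2$.

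\emph{Expectation per table.} Fix a table $T_j^\pi$ with compound hash $g_j=(h_{j,1},\dots,h_{j,\mu_\pi})$. For any far point $x\in\mathcal{D}_\pi$ with $\norm{x,q}>cR$, independence of the base hashes and the $(R,cR,p_1,p_2)$-sensitivity give
\[
\mathbb{P}[g_j(x)=g_j(q)]\le p_2^{\mu_\pi}.
\]
With the choice $\mu_\pi=\log n_\pi/\log(1/p_2)$ from Theorem~\ref{theorem:ell}, we have $p_2^{\mu_\pi}=1/n_\pi$. Summing over the at most $n_\pi$ far points in the partition, the expected number of false positives colliding with $q$ in table $j$ is at most $n_\pi\cdot(1/n_\pi)=1$.

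\emph{Summing over tables.} Let $F$ be the number of far points in $C_\pi=\bigcup_{j=1}^{\ell_\pi}T_j^\pi[g_j(q)]$. I would bound $F$ by the total count with multiplicity over the $\ell_\pi$ tables, so linearity of expectation gives $\mathbb{E}[F]\le \ell_\pi$. This needs no independence across tables.

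\emph{Concentration.} Markov's inequality gives $\mathbb{P}[F\ge 2\ell_\pi/\delta]\le \mathbb{E}[F]\,\delta/(2\ell_\pi)\le\delta/2$. So with probability at least $1-\delta/2$ we have $F<2\ell_\pi/\delta$, hence $F\le\lceil 2\ell_\pi/\delta\rceil$. That is the first claim.

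\emph{Deriving $k^*_\pi$.} If Algorithm~\ref{alg:retrieve-near-points} keeps $k^*_\pi=k_\pi+\lceil 2\ell_\pi/\delta\rceil$ candidates, then on this event at most $\lceil 2\ell_\pi/\delta\rceil$ of them are false positives. Pigeonhole then leaves at least $k_\pi$ near points, provided the collision set actually contains that many near points. That provision is exactly what Theorem~\ref{theorem:ell} supplies with probability at least $1-\delta/2$; a union bound yields overall success probability at least $1-\delta$.

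The step I expect to be the main obstacle is making this pigeonhole argument rigorous. The truncation ``keep $k^*_\pi$ points'' in the algorithm happens before sorting, so the retained subset could be arbitrary. I would interpret $k^*_\pi$ as the retrieval budget: if the algorithm examines at least $k^*_\pi$ collisions, or all of $C_\pi$ when fewer exist, then near points are not crowded out. I would also treat the points at distance in $(R,cR]$ as neither near nor false positive, matching the LSH convention. The expectation and Markov steps themselves are routine.
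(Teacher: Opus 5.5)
Your proposal is correct and matches the argument the paper's $\lceil 2\ell_\pi/\delta\rceil$ bound comes from. The choice $p_2^{\mu_\pi}=1/n_\pi$ gives at most one expected far collision per table, linearity over the $\ell_\pi$ tables gives $\mathbb{E}[F]\le\ell_\pi$, and Markov with failure budget $\delta/2$ yields the bound and hence $k^*_\pi$. The obstacle you flag goes away if ``near'' is read in the $c$-approximate sense (within $cR$): pigeonhole then applies to \emph{any} $k^*_\pi$-subset of $C_\pi$, so the order of truncation versus sorting does not matter, and if $|C_\pi|<k^*_\pi$ every candidate is kept, so the near points guaranteed by Theorem~\ref{theorem:ell} survive.
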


\begin{theorem}[Running Time of \AlgNNP]
\label{theorem:alg-pp}
    Let partition $\pi$ have $n_\pi$ vectors, each of $d$-dimensional, and $\ell_\pi$ LSH tables using $\mu$ concatenated base hashes per table. Let $k_\pi$ be the quota for $\pi$, and let $k^*_\pi$ be defined by Eq.~\ref{eq:k-star}. Assume $\ell_\pi$ satisfies Eq.~\ref{eq:ell}, with probability at least $1 - \delta$, \AlgPP runs in $O(\ell_\pi \mu d + k^*_\pi d + k^*_\pi \log k_\pi)$.
\end{theorem}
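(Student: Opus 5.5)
The bound is a cost accounting over the steps of Algorithm~\ref{alg:retrieve-near-points}, read on the good event of Theorems~\ref{theorem:ell} and~\ref{theorem:k-star}. (The statement names \AlgPP, but the running time it gives is that of the subroutine \AlgNNP executed in Algorithm~\ref{alg:retrieve-near-points}, so that is what I analyze.) I will split the algorithm into three phases and bound each: (i) hashing the query and probing buckets, (ii) truncating the gathered set to $k^*_\pi$ points and computing their distances, and (iii) selecting and sorting the first $k_\pi$ of them. The probability $1-\delta$ comes from a union bound. Theorem~\ref{theorem:ell} gives, for $\ell_\pi$ satisfying Eq.~\ref{eq:ell}, failure probability at most $\delta/2$ of missing required near points. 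Theorem~\ref{theorem:k-star} gives failure probability at most $\delta/2$ that the collision set contains more than $\lceil 2\ell_\pi/\delta\rceil$ false positives. On the complement of both events, which has probability at least $1-\delta$, I argue deterministically.

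For phase (i), each base hash costs $O(d)$ (an inner product plus floor, or a sign). Each compound key $g_j(q)$ therefore costs $O(\mu d)$, so hashing the query into all $\ell_\pi$ tables costs $O(\ell_\pi \mu d)$. Looking up a bucket by its key is $O(\mu)$ per table, which is absorbed.

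Gathering collision points needs care, because buckets may be large. I will stop the scan as soon as $k^*_\pi$ distinct points have been collected; line ``keep $k^*_\pi$ points'' allows this. The gathering then costs $O(k^*_\pi)$ plus $O(\ell_\pi)$ bucket accesses. I expect this to be the main obstacle: the time bound is only meaningful if we never touch more than $O(k^*_\pi)$ points. I would justify the early termination using Theorem~\ref{theorem:k-star}: on the good event, any $k^*_\pi = k_\pi + \lceil 2\ell_\pi/\delta\rceil$ retrieved points include at least $k_\pi$ near points. Hence truncation loses no correctness, and the scan work is $O(k^*_\pi + \ell_\pi)$, which is dominated by the other terms.

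For phase (ii), computing $\norm{x,q}$ for the $k^*_\pi$ kept points costs $O(k^*_\pi d)$. For phase (iii), we only need the $k_\pi$ smallest distances in sorted order. I will keep a max-heap of size $k_\pi$ while streaming the $k^*_\pi$ distances, which costs $O(k^*_\pi \log k_\pi)$. Extracting and sorting the heap costs $O(k_\pi\log k_\pi)$, which is at most $O(k^*_\pi\log k_\pi)$. Summing the three phases gives $O(\ell_\pi \mu d + k^*_\pi d + k^*_\pi \log k_\pi)$ with probability at least $1-\delta$, as claimed. If a partition has fewer than $k_\pi$ points, all costs only shrink.
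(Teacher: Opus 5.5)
Your three-phase accounting is the right one: $O(\ell_\pi\mu d)$ for the query keys, $O(k^*_\pi d)$ for the distances of the kept points, a size-$k_\pi$ heap to get $\log k_\pi$ rather than $\log k^*_\pi$, and a union bound over the two $\delta/2$ events. The step that fails is the one you flagged as the main obstacle. With the rule ``stop once $k^*_\pi$ \emph{distinct} points are collected,'' the scan does not cost $O(k^*_\pi+\ell_\pi)$. A point can sit in $q$'s bucket in all $\ell_\pi$ tables, and every re-encounter costs a probe while adding nothing to the distinct count. Before you stop, each table can contribute up to $k^*_\pi-1$ repeats, so you only get $O(\ell_\pi k^*_\pi)$, and this is attained. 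Suppose $k^*_\pi-1$ points of $\mathcal{D}_\pi$ coincide with $q$, and the other points lie far enough away that they essentially never land in $q$'s buckets. Then every probed bucket holds exactly those $k^*_\pi-1$ points, and you read $\ell_\pi(k^*_\pi-1)$ entries. This instance lies inside your good event (there are no false positives at all), so conditioning on it does not help. Because $k^*_\pi\ge 2\ell_\pi/\delta$ (and $\delta\le 1$, $\mu\ge 1$), the claimed bound is at most $k^*_\pi(2\mu d+\log k_\pi)$. The scan therefore exceeds it by a factor of order $\ell_\pi/(\mu d+\log k_\pi)$, and Eq.~\ref{eq:ell} makes large $\ell_\pi$ the normal regime.

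To get the stated bound, cap the number of bucket entries read at $k^*_\pi$, counted \emph{with} multiplicity. This is the standard LSH rule of stopping after a fixed number of retrieved entries, duplicates included. Then scanning costs $O(\ell_\pi+k^*_\pi)$, at most $k^*_\pi$ distinct points survive, and the rest of your accounting yields the bound deterministically. So the $1-\delta$ is not even needed for the time claim.

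What no longer holds is your claim that truncation loses no correctness. The Markov argument behind Theorem~\ref{theorem:k-star} (at most one expected far collision per table) bounds far collisions counted with multiplicity. On the good event, at least $k_\pi$ of the first $k^*_\pi$ entries are therefore non-far, but they may be repeats of fewer than $k_\pi$ distinct points. Forcing $k_\pi$ distinct non-far points out of a multiplicity cap needs about $\ell_\pi(k_\pi-1)+\lceil 2\ell_\pi/\delta\rceil+1$ entries, which brings in terms absent from the statement.

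As written, your argument asserts both the stated time and harmless truncation, and only one of them survives. Either present the result as a pure running-time bound for the multiplicity-capped scan, or keep distinct counting and carry an extra $O(\ell_\pi k^*_\pi)$ term.
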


\subsubsection{Algorithm \AlgNN}\label{corner}
Overall pseudo-code of \AlgNN is presented in Algorithm~\ref{algnn}.
Due to the approximate nature of LSH, some partitions may yield fewer than $k_\pi$ collision points. All available points from $\pi$ are included in the candidate pool $\mathcal{C}$, even when this number is strictly smaller than $k_\pi$. Note that the \MAtFair problem is solved globally across all partitions, and feasibility does not require that each partition meet its local target $k_\pi$ individually, only that the union $\mathcal{C}$ contains a feasible solution of size $k$.

\begin{algorithm}[H]
\caption{\AlgNN - Retrieve near candidates for a given $\mathcal{Q}$}
\label{algnn}
\begin{algorithmic}[1]
\Require query $Q = (q,\hat{\beta})$, 
         index structures $\{b_\Pi\}$
\Ensure $\mathcal{C}$
\State $\mathcal{C} \gets \emptyset$
\State identify partition set $\Pi_{\hat{\beta}}$
\For{each partition $\pi \in \Pi_{\hat{\beta}}$}
    \State $\mathcal{C}_{\pi} \gets$ \AlgNNP($b_\Pi, \pi_q$) 
     \State $\mathcal{C} \gets \mathcal{C} \cup \mathcal{C}_{\pi}$
\EndFor
\State \Return $\mathcal{C}$
\end{algorithmic}
\end{algorithm}

\begin{corollary}[Running Time of \AlgNN]
Let $\Pi_{\hat{\beta}}$ be the set of partitions selected for probing by the fairness constraints in $\hat{\beta}$.
With probability at least $1 - \delta$, running \AlgNN, which calls \AlgNNP for each $\pi_q \in \Pi_{\hat{\beta}}$, takes:
\begin{equation}
\label{eq:algnn-complexity}
O\left( |\Pi| \cdot |\mathcal{B}_{Q}| + \sum_{\pi_q \in \Pi_{\hat{\beta}}} \ell_\pi \mu d + k_\pi^*d + k^*_\pi \log k_\pi\right)
\end{equation}
\end{corollary}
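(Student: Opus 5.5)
The corollary follows by summing the two phases of \AlgNN: the bitmap step that identifies the relevant partitions, and the loop that calls \AlgNNP once per relevant partition. Theorem~\ref{theorem:alg-pp} already bounds each call. What remains is to bound the identification step and to show that the probability guarantee survives the sum.

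\textbf{Partition identification.} The query bitmap set $\mathcal{B}_Q$ is built from the Cartesian product of the constrained attribute values in $\hat{\beta}$. Each bitmap has length $\sum_j \lceil \log_2 (|V_j|+1)\rceil$, which is $O(1)$ machine words for fixed attribute domains, so one mask-and-XOR test $((b_\pi\wedge M_Q)\oplus b_{Q_i})=0$ costs $O(1)$. A naive scan tests every preprocessed partition bitmap $b_\pi$, $\pi\in\Pi$, against every $b_{Q_i}\in\mathcal{B}_Q$. That gives $O(|\Pi|\cdot|\mathcal{B}_Q|)$ time for line~2 of Algorithm~\ref{algnn}, plus $O(|\mathcal{B}_Q|)$ to generate $\mathcal{B}_Q$, which is absorbed. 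The union $\mathcal{C}\gets\mathcal{C}\cup\mathcal{C}_\pi$ costs $O(k_\pi)$ per partition and is dominated by the \AlgNNP term.

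\textbf{Retrieval loop.} For each $\pi_q\in\Pi_{\hat\beta}$, I invoke Theorem~\ref{theorem:alg-pp} with $\ell_\pi$ chosen per Theorem~\ref{theorem:ell} and $k^*_\pi$ per Theorem~\ref{theorem:k-star}. Each call then costs $O(\ell_\pi\mu d + k^*_\pi d + k^*_\pi\log k_\pi)$ with probability at least $1-\delta$. The cost comes from three parts:
\begin{itemize}
\item computing $\ell_\pi$ compound keys at $O(\mu d)$ each;
\item evaluating distances for the $k^*_\pi$ retained candidates;
\item selecting and sorting the top $k_\pi$.
\end{itemize}
Summing over $\Pi_{\hat\beta}$ and adding the identification cost gives Eq.~\ref{eq:algnn-complexity}.

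\textbf{Main obstacle: the probability bound.} Theorem~\ref{theorem:alg-pp} guarantees probability $1-\delta$ only per partition, and a naive union bound over $|\Pi_{\hat\beta}|$ calls would give $1-|\Pi_{\hat\beta}|\delta$. I would handle this by running each \AlgNNP call with failure parameter $\delta/|\Pi_{\hat\beta}|$. Since $\ell_\pi$ and $k^*_\pi$ depend on $\delta$ only through $\log(2K/\delta)$ and $\lceil 2\ell_\pi/\delta\rceil$, this rescaling changes those quantities only within the same $O(\cdot)$ form, with $\ell_\pi,k^*_\pi$ now read under the rescaled parameter. Alternatively, I would observe that the running-time bound is deterministic once the algorithm truncates to $k^*_\pi$ points (line 6 of Algorithm~\ref{alg:retrieve-near-points}), so that only the quality guarantee is probabilistic. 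In that case the time bound holds without any union bound, which is the cleaner route, and I would present that argument first.
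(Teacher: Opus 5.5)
Your union-bound route is exactly the paper's proof. It consists of an $O(|\Pi|\cdot|\mathcal{B}_Q|)$ scan to identify $\Pi_{\hat\beta}$, Theorem~\ref{theorem:alg-pp} applied per partition with failure budget $\delta/|\Pi_{\hat\beta}|$, a union bound, and a linear sum of the per-partition costs.

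Your remark that $\ell_\pi$ and $k^*_\pi$ must then be re-read under the rescaled parameter is something the paper silently skips. Two refinements apply to it:
\begin{itemize}
\item It is not a constant-factor change: the additive term of $k^*_\pi$ becomes $\lceil 2\ell_\pi|\Pi_{\hat\beta}|/\delta\rceil$.
\item $\ell_\pi$ is fixed when the index is built, before any query (and hence $\Pi_{\hat\beta}$) is known. Its rescaling therefore has to use a query-independent bound such as $|\Pi|$, which costs only a logarithmic factor inside $\ell_\pi$.
\end{itemize}

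The route you call cleaner, and say you would present first, does not work for Algorithm~\ref{alg:retrieve-near-points} as written. The truncation to $k^*_\pi$ points is applied only after the loop over the $\ell_\pi$ tables has already collected the entire union $\bigcup_j T_j^\pi[g_j(q)]$. That loop costs time proportional to $\sum_j |T_j^\pi[g_j(q)]|$, which is random and not bounded by $k^*_\pi$:
\begin{itemize}
\item collisions with points farther than $cR$ from $q$ are controlled only in probability, via Theorem~\ref{theorem:k-star};
\item collisions with points within distance $cR$ of $q$ are not controlled at all, and there may be $\Theta(n_\pi)$ such points.
\end{itemize}
The three costs you itemize (hashing, distance evaluation, selection) are indeed deterministic once truncation has happened. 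But the bucket scan that precedes truncation is the one step whose cost is random, and it is missing from your list.

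To make the shortcut valid, you must change the algorithm so that the scan stops once $k^*_\pi$ candidates have been examined, which is the standard early-termination rule in LSH. The time bound then holds surely. That is stronger than the corollary, and for the running time it avoids both the union bound and the rescaling issue above. You would, however, also have to recheck that the guarantee of Theorem~\ref{theorem:k-star} survives early stopping. Without that modification, lead with the union-bound argument, which is the correct proof of the statement as posed.
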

\begin{proof}
    Apply Theorem~\ref{theorem:alg-pp} to each $\pi \in \Pi_{\hat{\beta}}$. To identify $\Pi_{\hat{\beta}}$, we need to scan all partitions and all query bitmaps once, which takes $O(|\Pi| \cdot |\mathcal{B}_{Q}|)$. To ensure a joint success probability $\ge 1 - \delta$ over all partitions, allocate failure probability $\delta / |\Pi_{\hat{\beta}}|$ per partition and apply a union bound across $|\Pi_{\hat{\beta}}|$ calls. The per-partition running time then sums linearly over $\pi$, yielding the stated total time complexity. 
\end{proof}

\subsection{Postprocessing Algorithms}\label{sec:qppostprocess}
Given the candidate set $\mathcal{C}$, we now describe the postprocessing algorithms \AlgPP used to enforce the fairness constraints when selecting the final results from $\mathcal{C}$. As mentioned in Section~\ref{sec:query-processing}, any technique to retrieve near candidates could be used to prepare inputs $\mathcal{C}$ for the postprocessing algorithms. Therefore, if $\mathcal{C}$ contains the $k_\pi$ true nearest neighbors from each partition, then the algorithms optimally solve the corresponding \MAtFair problem. In the discussion below, we assume that $\mathcal{C}$ contains $k_\pi$ points from each partition $\pi$. We later on lift that assumption and discuss what happens when that is not the case.

We first present \AlgPPOne, the postprocessing method for \MAtFairOne. 
Next, \AlgPPTwo\ provides a network-flow based postprocessing solution for \MAtFairTwo. Finally, \AlgPPThree\ gives an ILP  based postprocessing solution for \MAtFairThree.

\subsubsection{Algorithm \AlgPPOne}
When fairness constraints involve a single protected attribute $A_j$, each data point $x \in \mathcal{C}$ belongs to exactly one protected attribute value   $v$ of $A_j$ with a required count constraint $\hat{\beta}_{j, v}$. The goal is to select exactly $k = \sum_{v \in V} \hat{\beta}_{1, v}$ points minimizing the total distance to the query $q$ while meeting every group’s required count.
Because the fairness constraint is separable across groups, the problem decomposes into independent top-$\hat{\beta}_{j, v}$ retrievals per attribute value.

For each $v$, \MAtFairOne computes the distances $\norm{x_i, q}$ for all $x \in \mathcal{C}_v$ and sorts them in ascending order. The algorithm then keeps the $\hat{\beta}_{j,v}$ closest items from each $v$. The union of these selected subsets forms the final results.

{\bf Running time.} Let $n_v$ be the number of points for attribute value $v$. Distance computation over all points costs $O(\sum_v n_vd)$ and sorting costs $O(\sum_v n_v \log n_v)$. Since $\sum_v n_v = |\mathcal{C}|$, \MAtFairOne  takes $O(|\mathcal{C}|d + |\mathcal{C}| \log |\mathcal{C}|)$.

\subsubsection{Algorithm \AlgPPTwo}
Algorithm \AlgPPTwo is the postprocessing algorithm designed when fairness constraints span two protected attributes $A_i$ and $A_j$,  the problem can be modeled as a bipartite network $G = (U_i, U_j, E)$ given the candidate set $\mathcal{C}$, where:
\begin{itemize}[leftmargin=1.5em]
    \item $U_i = \{ U_{i, v} \mid v \in V_i \}$ represents nodes corresponding to the possible values of $A_i$;
    \item $U_j = \{ U_{j, t} \mid t \in V_j \}$ represents nodes corresponding to the possible values of $A_j$;
    \item for each candidate $x_p \in \mathcal{C}$ with attributes $(a_{p, i}=v, a_{p, j} = t)$, create an edge $e_p = (U_{i,v} \rightarrow U_{j,t})$ with cost $c_p = \norm{x_p, q}$.
\end{itemize}
The fairness-constrained $k$-NN problem can  be expressed as a min-cost flow problem in the bipartite network $G$ where each node in $U_i$ is a source and each node $U_j$ is a sink.
Let $E$ be the multiset of edges $e_p$ (one per candidate). 
Each edge $e_p$ is assigned capacity 1. A unit flow in $e_p$ implies 
the selection of candidate $x_p$. (It is well known that the flow is always integral.) The resulting formulation.
\vspace{-0.05in}
\begin{equation}
\label{eq:network-flow-problem}
\begin{aligned}
    \min_{f_p} &\sum_{p} f_p \, c_p \\
    \text{s.t.} \quad
        \forall v \in V_i: \sum_{p \ : \ a_{p,i}=v} f_p &= \hat{\beta}_{i,v},
        \quad \text{(fairness for } A_i) \\
        \forall t \in V_j: \sum_{p \ : \ a_{p,j}=t} f_p &= \hat{\beta}_{j,t},
        \quad \text{(fairness for } A_j) \\
        \forall p\in E: \quad f_p \in \{0,1\}
\end{aligned}
\end{equation}


Given this modeling of the problem, \MAtFairTwo then adapts a \emph{min-cost max-flow} algorithm~\cite{chen2023almost} which has polynomial time complexity of $O(|E|^{1 + o(1)})$, close-to-linear w.r.t. the number of bipartite edges $|E|$. An example bipartite graph for two-attribute case is illustrated in Figure~\ref{fig:network-flow}.

\begin{figure}[t]
    \centering
    \includegraphics[width=1.0\linewidth, trim=0.74cm 13.95cm 1.6cm 2.6cm, clip]{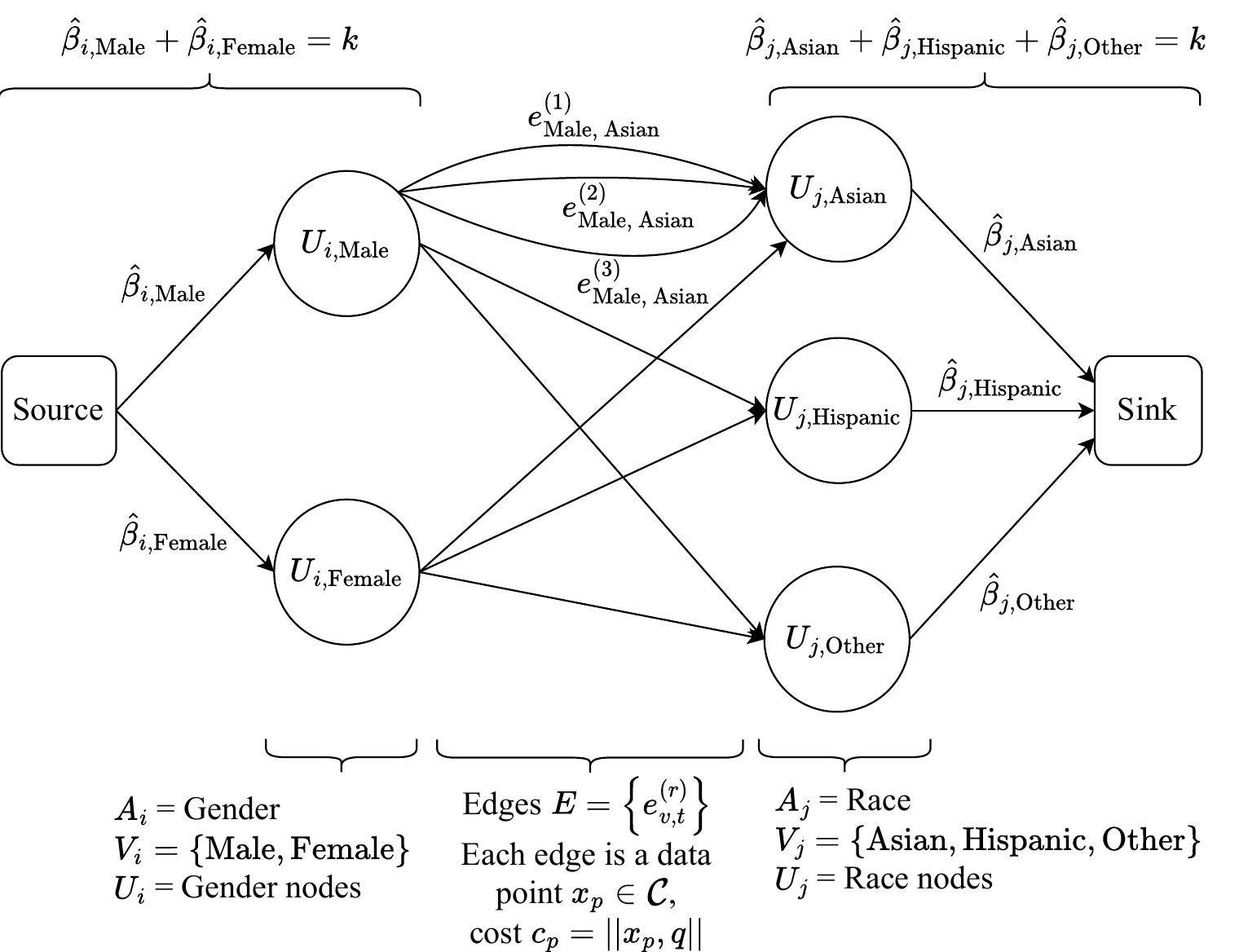}
    \caption{\small An example bipartite graph for two-attribute case. To satisfy fairness constraints, we select $k$ edges from $E$ that respect the $\hat{\beta}$ values demanded by Source and Sink. Note that it is possible to have multiple parallel edges for each pair $(v, t)$, one per data point in the candidate set $\mathcal{C}$, as shown for the pair $(v, t) = (\text{Male, Asian})$ where there are 3 candidates being Male Asian.}
    \label{fig:network-flow}
\end{figure}




\subsubsection{Algorithm \AlgPPThree}
We formulate the $k$ candidate selection problem from $\mathcal{C}$ as an Integer Linear Program (ILP) with the following goal:
Minimize the total distance between the query $q$ and the selected items, i.e., find a set of $k$ nearest neighbors that collectively minimize $\sum_i z_i \cdot \norm{x_i, q}$, while satisfying the fairness constraints. The ILP is formalized as follows:

\begin{equation}
\begin{aligned}
\text{minimize} \quad & \sum_{i=1}^{|\mathcal{C}|} z_i \cdot \norm{x_i, q} \\
\text{subject to} \quad 
& \sum_{i=1}^{|\mathcal{C}|} z_i = k, \\
& \forall j \in [m], \forall v \in V_j: \sum_{i: a_{i,j} = v} z_i = \hat{\beta}_{j,v}, \\
& z_i \in \{0,1\}, \quad \forall i \in [|\mathcal{C}|].
\end{aligned}
\end{equation}

\noindent
\textbf{Explanation.}
\begin{itemize}[leftmargin=2em]
    \item $\mathcal{C}$: the set of candidate points retrieved by \AlgNN.
    \item $x_i$: the $i$-th candidate in $\mathcal{C}$.
    \item $q$: the query vector.
    \item $\norm{x_i, q}$: the distance between $q$ and $x_i$.
    \item $z_i \in \{0,1\}$: a binary selection variable indicating whether candidate $x_i$ is selected ($z_i = 1$) or not ($z_i = 0$).
    \item $k$: the total number of items to be selected.
    \item $m$: the number of protected attributes.
    \item $V_j$: the set of possible values for the $j$-th protected attribute.
    \item $a_{i,j}$: the value of the $j$-th protected attribute for candidate $x_i$.
    \item $\hat{\beta}_{j,v}$: the desired number of items belonging to group $(j,v)$ to ensure fairness.
\end{itemize}

\noindent
\textbf{Constraints.}
\begin{enumerate}[leftmargin=2em]
    \item The first constraint $\sum_i z_i = k$ ensures exactly $k$ items are selected.
    \item The second constraint enforces fairness: for each protected attribute $j$ and each of its values $v$, the number of selected candidates from group $(j,v)$ equals $\hat{\beta}_{j,v}$.
    \item The binary constraint $z_i \in \{0,1\}$ ensures that each candidate is either selected or not.
\end{enumerate}

\subsubsection{Final Algorithms \& A Corner Case}
\AlgFairOne\ is obtained by first retrieving the near candidates using \AlgNN,  followed by postprocessing with Algorithm~\AlgPPOne. 
Similarly, \AlgFairTwo\ retrieves the near candidates using \AlgNN\ and then  applies Algorithm~\AlgPPTwo\ for postprocessing. 
Finally, \AlgFairThree\ is constructed by invoking \AlgNN\ to retrieve the near candidates and postprocessing them using Algorithm~\AlgPPThree.

Recall Section~\ref{corner}, due to the approximate nature of LSH used in \AlgNN, some partitions may produce fewer than $k_\pi$ collision points. In such cases, the postprocessing algorithms do not discard the partition $\pi$ or abort early. Instead, all available points from $\pi$ are added to the candidate pool $\mathcal{C}$, and postprocessing continues as usual.

After \AlgFairOne,\AlgFairTwo, or \AlgFairThree produces an answer set, the algorithm checks whether the fairness constraints are satisfied. If the constraints are violated, the query is reported as failed; otherwise, the query is deemed successful and the final results are returned.
\subsection{Qualitative Guarantees}\label{sec:qpanalysis}
 In Section~\ref{sec:analysisalone}, we evaluate the qualitative properties of the postprocessing algorithms as a stand alone framework. In Section~\ref{sec:analysistogether}, we evaluate the quality of the overall framework, i.e., near candidates retrieved using \AlgNN followed by the postprocessing algorithms. 

\subsubsection{Qualitative Guarantees of the Postprocessing Algorithms}\label{sec:analysisalone}
All proofs could be found in our technical report~\cite{techreport}. We prove below that the proposed postprocessing algorithms exactly satisfy the fairness constraints, provided a feasible fair solution of size $k$ exists in the candidate set $\mathcal{C}$.

\begin{theorem}
\AlgPPOne is optimal given the candidate set $\mathcal{C}$.
\end{theorem}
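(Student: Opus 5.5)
The plan is a direct exchange argument that exploits the separability of the single-attribute constraints. Fix the single protected attribute $A_j$ and partition the candidate set as $\mathcal{C} = \bigsqcup_{v \in V_j} \mathcal{C}_v$, where $\mathcal{C}_v$ consists of the candidates with $a_j = v$. Assume a feasible fair solution exists in $\mathcal{C}$, so that $|\mathcal{C}_v| \ge \hat{\beta}_{j,v}$ for every $v$; if this fails for some $v$, no feasible solution exists and the statement is vacuous.

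First I will check feasibility of the output of \AlgPPOne. It selects exactly $\hat{\beta}_{j,v}$ items from each $\mathcal{C}_v$. Since the classes are disjoint, the output $S^{A}$ has exactly $\hat{\beta}_{j,v}$ items with value $v$, and its size is $\sum_v \hat{\beta}_{j,v} = k$. Hence constraint~\eqref{eq:fairness-constraints} holds.

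Second, I will show optimality by decomposition. For any feasible $S \subseteq \mathcal{C}$, write $S_v = S \cap \mathcal{C}_v$. The fairness constraint forces $|S_v| = \hat{\beta}_{j,v}$, and
\[
\sum_{x \in S} \norm{x,q} = \sum_{v \in V_j} \sum_{x \in S_v} \norm{x,q}.
\]
The objective is therefore a sum of independent terms, and each term depends only on the choice of a $\hat{\beta}_{j,v}$-subset of $\mathcal{C}_v$. It suffices to show that, for each $v$, the $\hat{\beta}_{j,v}$ closest points of $\mathcal{C}_v$ minimize $\sum_{x \in S_v}\norm{x,q}$ over all subsets of $\mathcal{C}_v$ of that size.

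This per-group claim is the only step requiring an argument, and it is a standard one. Suppose $S_v$ is a subset of size $\hat{\beta}_{j,v}$ that is not a prefix of the sorted order. Then there exist $y \in S_v$ and $x \in \mathcal{C}_v \setminus S_v$ with $\norm{x,q} \le \norm{y,q}$, where $x$ lies among the first $\hat{\beta}_{j,v}$ points in sorted order. Swapping $y$ for $x$ does not increase the cost and strictly increases $|S_v \cap S^A_v|$. After finitely many swaps $S_v$ becomes $S^A_v$, so $\mathrm{cost}(S^A_v) \le \mathrm{cost}(S_v)$. Ties are harmless, since any tie-breaking rule in the sort yields the same minimum cost.

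Summing the per-group inequalities over $v$ gives $\mathrm{cost}(S^A) \le \mathrm{cost}(S)$ for every feasible $S \subseteq \mathcal{C}$, which proves optimality with respect to $\mathcal{C}$. I expect no real obstacle. The only care needed is to state the feasibility precondition explicitly and to note that optimality is relative to the pool $\mathcal{C}$, not to $\mathcal{D}$.
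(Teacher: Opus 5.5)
Your proof is correct and takes the same approach as the paper, which justifies \emph{Post-Alg-1} by the separability of the single-attribute count constraints: the objective splits into independent per-value top-$\hat{\beta}_{j,v}$ selections, and your exchange argument supplies the per-group optimality that the paper leaves implicit. One small tightening is needed in the swap step: choose $y \in S_v \setminus S^A_v$ rather than merely $y \in S_v$, because only then does $|S_v \cap S^A_v|$ strictly increase.
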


\begin{theorem}
\AlgPPTwo is optimal given the candidate set $\mathcal{C}$.
\end{theorem}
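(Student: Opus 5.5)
The plan is to show that the bipartite min-cost flow instance built by \AlgPPTwo\ is equivalent to the restricted \MAtFairTwo\ problem over $\mathcal{C}$, and then to invoke integrality of min-cost flow. Three things must be established: a bijection between feasible fair selections $S \subseteq \mathcal{C}$ and integral feasible flows, preservation of cost under that bijection, and the fact that the flow algorithm returns an integral optimum.

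\textbf{Step 1 (selections give flows).} Take any $S \subseteq \mathcal{C}$ with $|S| = k$ that satisfies the count constraints on $A_i$ and $A_j$. Set $f_p = 1$ if $x_p \in S$ and $f_p = 0$ otherwise. The number of edges leaving $U_{i,v}$ that carry flow equals the number of points of $S$ with $a_{p,i} = v$, which is $\hat{\beta}_{i,v}$. Symmetrically, the flow into $U_{j,t}$ is $\hat{\beta}_{j,t}$. Hence $f$ satisfies the constraints of Eq.~\ref{eq:network-flow-problem}, and its cost is
\[
\sum_p f_p c_p = \sum_{x_p \in S} \norm{x_p, q}.
\]

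\textbf{Step 2 (integral flows give selections).} Conversely, let $f$ be a $0/1$ feasible flow and set $S = \{x_p : f_p = 1\}$. The source constraints give $|S| = \sum_v \hat{\beta}_{i,v} = k$, and both families of count constraints hold by the same counting as in Step 1. The cost again equals the total distance. Each candidate has its own parallel edge, so these two maps are inverse bijections that preserve cost. It follows that an optimal integral flow yields an optimal fair subset of $\mathcal{C}$. It also follows that the flow problem is infeasible exactly when no fair subset of size $k$ exists in $\mathcal{C}$, and that case is caught by the final check.

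\textbf{Step 3 (integrality).} This is the main obstacle: the solver \cite{chen2023almost} works on the LP relaxation $0 \le f_p \le 1$, so the optimum it returns must be shown to be integral. I would add a super-source $s$ with arcs $s \to U_{i,v}$ of capacity $\hat{\beta}_{i,v}$, and a super-sink $t$ with arcs $U_{j,t} \to t$ of capacity $\hat{\beta}_{j,t}$. All these arcs have cost $0$, and the flow value is required to be $k$. The constraint matrix of this flow network is the node–arc incidence matrix of a directed graph, which is totally unimodular, and all capacities and demands are integers. The LP polytope therefore has integral vertices, and standard min-cost flow algorithms return such a vertex (by the integral flow theorem). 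Feasibility of value $k$ forces every source arc and sink arc to be saturated, which recovers the equality constraints exactly. Combining Steps 1–3, \AlgPPTwo\ outputs a minimum-total-distance subset of $\mathcal{C}$ satisfying both attributes' fairness constraints whenever one exists, which proves the theorem.
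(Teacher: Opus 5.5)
Your proof is correct and takes the route the paper indicates. The formulation in Eq.~\ref{eq:network-flow-problem}, with one unit-capacity edge per candidate, gives a cost-preserving bijection between fair $k$-subsets of $\mathcal{C}$ and $0/1$ feasible flows, and integrality of min-cost flow (asserted by the paper as well known, justified by you via total unimodularity after adding a super-source and super-sink) yields an optimal fair subset or certifies infeasibility.

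Two minor points. First, the cited almost-linear-time solver already returns an exact integral flow, but it assumes integral costs; the real-valued distances, not an LP relaxation, are what need care, via cost scaling or an exact combinatorial method such as successive shortest paths. Second, avoid reusing $t$ for both the super-sink and the values of $A_j$.
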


\begin{theorem}
\AlgPPThree is optimal given the candidate set $\mathcal{C}$.
\end{theorem}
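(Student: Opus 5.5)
The claim is that \AlgPPThree returns a minimum-cost fair subset of $\mathcal{C}$ of size $k$ whenever one exists. The plan is to show that the ILP solved by \AlgPPThree is an exact reformulation of \MAtFair restricted to the pool $\mathcal{C}$. Optimality then follows from the fact that an exact ILP solver returns an optimal solution of its program.

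\textbf{Step 1: a bijection.} Map each subset $S \subseteq \mathcal{C}$ to its indicator vector $z^S \in \{0,1\}^{|\mathcal{C}|}$, with $z^S_i = 1$ iff $x_i \in S$. This is a bijection between subsets of $\mathcal{C}$ and binary vectors, so the integrality constraint $z_i \in \{0,1\}$ introduces no extra or missing solutions.

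\textbf{Step 2: feasibility is preserved.} First, $|S| = \sum_i z^S_i$, so the cardinality constraint is exactly $|S|=k$. Second, for every attribute $j$ and value $v$, $\sum_{i: a_{i,j}=v} z^S_i = |\{(x,a)\in S : a_j = v\}|$. Hence the per-group equalities in the ILP are literally Eq.~\ref{eq:fairness-constraints}. Therefore $S$ is a feasible fair set of size $k$ iff $z^S$ is ILP-feasible. One subtlety: if only a subset $\mathcal{A}'$ of attributes is constrained, the ILP should range over $j$ in $\mathcal{A}'$ only, and I would state this explicitly.

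\textbf{Step 3: objective values agree.} The ILP objective at $z^S$ is $\sum_i z^S_i \norm{x_i,q} = \sum_{x\in S}\norm{x,q}$, which is the \MAtFair objective.

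\textbf{Step 4: conclude.} Let $z^*$ be an optimal ILP solution and $S^*$ the corresponding set. Suppose some fair $S'$ of size $k$ in $\mathcal{C}$ had smaller cost. Then $z^{S'}$ would be ILP-feasible with smaller objective, contradicting optimality of $z^*$. If no feasible fair set exists in $\mathcal{C}$, the ILP is infeasible, and the algorithm correctly reports failure. Since the feasible set is finite and nonempty in the feasible case, an optimum exists.

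The main obstacle is not mathematical but one of interpretation. "Optimal" must mean optimal with respect to $\mathcal{C}$, assuming an exact ILP solver (for example, branch-and-bound run to completion). By the preceding theorem, we cannot hope for a polynomial-time guarantee, since even feasibility is strongly NP-hard. I would state this assumption explicitly, so that optimality is a correctness claim about the formulation rather than about running time.
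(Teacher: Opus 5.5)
Your proposal is correct and takes essentially the same route as the paper, whose proof is deferred to its technical report: via the subset/indicator-vector bijection, the ILP exactly encodes fair selection of $k$ points from $\mathcal{C}$. Hence an exact ILP optimum is an optimal fair $k$-subset of $\mathcal{C}$, and ILP infeasibility certifies that no such subset exists. Your remarks that the attribute constraints should range only over the constrained attributes $\mathcal{A}'$ (not all $j\in[m]$ as written), and that an exact solver must be assumed, are sound refinements.
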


\subsubsection{Qualitative Guarantees of the Overall Algorithmic Framework}\label{sec:analysistogether}

\begin{theorem}
If all near points are retrieved with probability at least 
$1 - \frac{\delta}{2}$ from a partition using \AlgNN and then postprocessed using the algorithms in Section~\ref{sec:qppostprocess}, then the probability that \AlgFairOne \AlgFairTwo \AlgFairThree return $C$ approximate nearest neighbors  of a query is  $1 - \frac{\Pi \delta}{2}$.
\end{theorem}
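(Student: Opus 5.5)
The plan is a union bound over the probed partitions, followed by a deterministic reduction to the optimality of the postprocessing algorithms. I interpret ``$\Pi$'' in the statement as the number of probed partitions $|\Pi_{\hat{\beta}}|$; this is at most $|\Pi|$, so the bound also holds when read with $|\Pi|$.

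\textbf{Per-partition good event.} For each $\pi \in \Pi_{\hat{\beta}}$, let $E_\pi$ be the event that \AlgNNP returns, among its $k_\pi$ output points, all near points of $q$ in $\mathcal{D}_\pi$ up to the quota $k_\pi$. Here ``near'' means within distance $cR$ of the $R$-near neighbors, as in the $(R,cR,p_1,p_2)$ guarantee. The hypothesis of the theorem, which is supplied by Theorem~\ref{theorem:ell} with the choice of $\ell_\pi$ in Eq.~\ref{eq:ell}, gives $\Pr[\overline{E_\pi}] \le \delta/2$. The false-positive failure mode is handled by Theorem~\ref{theorem:k-star}: because Algorithm~\ref{alg:retrieve-near-points} keeps $k^*_\pi$ candidates, false positives cannot crowd out the $k_\pi$ near points, except on an event already accounted for in the stated per-partition failure probability. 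I will take the hypothesis as covering both failure modes, so that the per-partition failure probability is $\delta/2$.

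\textbf{Union bound.} The hash tables of different partitions are built independently, but independence is not needed. Applying the union bound over the $|\Pi_{\hat{\beta}}|$ calls made by \AlgNN gives
\[
\Pr\Big[\bigcap_{\pi\in\Pi_{\hat{\beta}}} E_\pi\Big] \;\ge\; 1-\sum_{\pi}\Pr[\overline{E_\pi}] \;\ge\; 1-\frac{|\Pi_{\hat{\beta}}|\,\delta}{2}.
\]

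\textbf{Deterministic step on the good event.} Condition on $\bigcap_\pi E_\pi$. The pool $\mathcal{C}$ then contains, from every relevant partition, its $k_\pi$ closest near points, where closeness is measured up to the LSH approximation. I then need the key structural claim: some optimal fair solution $S^*$ over $\mathcal{D}$ uses at most $k_\pi$ points from each partition $\pi$. This holds because each point of $\pi$ contributes to every group $(j,v_j)$ of $\pi$, so at most $\min_j \hat\beta_{j,v_j}=k_\pi$ points of $S^*$ can lie in $\pi$ (Eq.~\ref{eq:k_pi}). An exchange argument then shows that, within each partition, $S^*$ may be assumed to use that partition's closest points. The exchange works because swapping a point of $\pi$ for a closer point of the same partition preserves every attribute count. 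Hence a fair solution whose points are the retrieved approximate near neighbors lies inside $\mathcal{C}$. By the optimality theorems for \AlgPPOne, \AlgPPTwo and \AlgPPThree, the postprocessing returns a fair set whose cost is at most that of this solution, that is, a set of $c$-approximate nearest neighbors. Combining this with the union bound gives the stated probability $1-\frac{\Pi\delta}{2}$.

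\textbf{Main obstacle.} The hardest step is making the exchange argument precise under approximate retrieval. LSH guarantees only points within $cR$, not the exact top-$k_\pi$, so the comparison has to be stated per partition as ``each selected point is within factor $c$ of the corresponding true neighbor.'' I would try to prove this by matching the points of $S^*\cap\mathcal{D}_\pi$ in sorted order to the retrieved points of $\pi$, which gives a pointwise factor-$c$ bound. Summing over points then yields total cost at most $c$ times the optimum.
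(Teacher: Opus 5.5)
Your skeleton matches the argument the paper relies on: per-partition failure $\delta/2$, a union bound over the probed partitions, then the pool-optimality theorems for \emph{Post-Alg-1}, \emph{Post-Alg-2}, \emph{Post-Alg-3+}. Your exchange lemma is a worthwhile addition. It says every fair set has at most $k_\pi$ points in $\mathcal{D}_\pi$ and can be taken to use a prefix of $\pi$'s distance order, which justifies the paper's unproved remark that a pool containing each partition's true top-$k_\pi$ yields the exact optimum.

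The gap is the last step, the pointwise factor-$c$ matching. The hypothesis (Theorem~\ref{theorem:ell}) protects only points within distance $R$. Points in $(R,cR]$ carry no retrieval guarantee, and Theorem~\ref{theorem:k-star} bounds only the \emph{number} of colliding points beyond $cR$, not their distances. Your matching is valid at positions where the true neighbor lies within $R$, since those points are retrieved and sort first, so the match is exact. It is also valid at positions filled by a retrieved point within $cR$, since the true neighbor then lies beyond $R$ and the ratio is below $c$. Nothing, however, guarantees that every position is of one of these two kinds.

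Concretely, take $m=1$ with $\hat\beta_{1,v}=2$, so $\pi=(v)$, $k_\pi=2$, and every fair set must take exactly two points of $\mathcal{D}_\pi$. Suppose $\mathcal{D}_\pi$ has one point within $R$, its second-closest point at distance $1.5R$ does not collide, and a point at distance $MR$ does. All hypotheses hold, yet the post-processing is forced to output the $MR$ point while the optimum uses the $1.5R$ point, so the ratio grows with $M$. If instead only the first point collides, $\mathcal{C}$ contains no fair set at all and the query fails although one exists in $\mathcal{D}$; the paper explicitly allows partitions that return fewer than $k_\pi$ points. So ``summing gives cost at most $c$ times the optimum'' does not follow.

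What your argument does prove is an exact version. On $\bigcap_\pi E_\pi$, every point of the prefix-form optimum $S^*$ that lies within $R$ is in $\mathcal{C}$. Hence whenever some optimal fair set lies in the $R$-ball around $q$ (e.g.\ every relevant partition has at least $k_\pi$ points within $R$), $S^*\subseteq\mathcal{C}$. The post-processing then returns an exact optimum with probability at least $1-|\Pi_{\hat\beta}|\delta/2$.

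A genuine factor-$c$ statement needs an additional event: every relevant partition returns at least $k_\pi$ points, all within $cR$. Under that event your sorted matching goes through. That event is not supplied by Theorems~\ref{theorem:ell} and~\ref{theorem:k-star}, so it must be assumed or separately bounded.
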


\section{Experimental Evaluations}
Our experimental evaluation has 3 primary goals - (i) despite appropriate adaptation, we demonstrate where graph based vector search techniques designed for near neighbor search fails. (ii) We evaluate the query processing qualitatively and scalability wise of our proposed algorithmic framework and compare with appropriate baselines. (iii) We report the pre-processing cost of the proposed algorithmic framework.

\subsection{Experimental Setup}
\noindent {\bf Software and hardware.} All algorithms are implemented in Python and executed on a Linux machine of $128$GB in RAM using a single CPU core AMD EPYC 7753.
For reproducibility, we provide the full implementation and scripts at: \url{https://github.com/ThinhOn/fair-retrieval}. We present a subset of results that are representative.

\noindent {\bf Datasets.}
We use four real-world datasets and one synthetic dataset (Table~\ref{tab:datasets}) that span diverse application domains and fairness-relevant characteristics. The real-world datasets include FairFace~\cite{fairface} and CelebA~\cite{celeba} for image embeddings, Audio~\cite{Audio} for speech representations, and GloVe~\cite{GloVe} for text embeddings.
We also construct a controlled synthetic dataset. Specifically, we fix a reference center in the embedding space and generate two types of points: (i) a very small  cluster whose vectors are tightly concentrated around this center, and (ii) a large set of remaining points that are located far from the center, but are themselves organized into another well-formed cluster.

\noindent {\bf Algorithms.}
We implement following types of algorithms.
\begin{enumerate}[leftmargin=2em]
   \item {\bf Nearest-neighbor retrieval (No Fairness).} These algorithms perform standard $k$-NN retrieval without fairness constraints.
   \begin{itemize}[leftmargin=0.5em]
       \item Brute-force $k$-NN: creates partitions using our proposed techniques in Section~\ref{sec:qpnear}. Given the query, it computes exact distances to all points for all relevant partitions and returns the actual top-$k$. 
       \item \textsc{SIEVE}~\cite{li2025sieve}: a state-of-the-art graph-based ANN method optimized for fast k-NN retrieval under query predicates, without fairness guarantees.
       \item \textsc{Filter-DiskANN}~\cite{gollapudi2023filtered}: extends \textsc{DiskANN}~\cite{simhadridiskann} with attribute-based filtering during graph traversal. Nodes that do not satisfy query constraints are skipped, but no guarantees are provided for satisfying multi-attribute count constraints.
       \item Our LSH based near candidate retrieval \AlgNN as discussed in Section~\ref{sec:qpnear}.
   \end{itemize}

   \item {\bf Fairness aware retrieval.} These are designed to ensure fairness.
   \begin{itemize}[leftmargin=0.5em]
       \item Single-Attribute Indexing + Retrieval ({\tt SAIR}): The process builds index per protected attribute. Each record gets indexed multiple times in that way, once per protected attribute. Given a query, it finds $k$ points per constraint,  following which it attempts to identify if the intersection of the retrieved points satisfies all the constraints in the query. If it does, the results are returned, else failed. 
       \item Joint Indexing + Retrieval ({\tt JIR})~\cite{stoyanovich2018online}: This assumes independence among protected attributes and assigns retrieval budget to each relevant partition $\pi \in b_Q$ (defined in Eq.~\ref{eq:bitmap-relevant}) based on the product of marginal proportions of the constrained attribute values encoded in $b_Q$.
       Specifically, if $b_Q$ imposes constraints on a subset of attributes $\mathcal{A}_Q \subseteq \mathcal{A}$, for each relevant partition $\pi = (a_1, \ldots, a_m)$, {\tt JIR} retrieves:
       $$k_\pi = \left\lceil k \cdot        \prod_{j \in [m], \ A_j \in \mathcal{A}_Q} \text{prop}(a_j) \right\rceil,$$
       where $\text{prop}(a_j)$ stands for the proportion of data points having attribute value $a_j$.
        \item Our proposed solutions described in Section~\ref{sec:qpnear}.

   \end{itemize}

   \item \textbf{Combining near point retrieval and fairness.}
   \begin{itemize}[leftmargin=0.5em]
       \item \textsc{SIEVE++}: Use \textsc{SIEVE} for near point retrieval. But uses our bitmap indexing and postprocessing algorithms for fairness.
       \item \textsc{Filter-DiskANN}: Extended by considering multiple protected attributes at the same time. A visited node is considered a valid candidate only if its protected attributes are consistent with the fairness constraints $\hat{\beta}$.
      \item {\tt SAIR} and {\tt JIR} using \textsc{SIEVE}, \textsc{Filter-DiskANN}, or \AlgNN.
       \item Our implemented solutions \AlgFairOne, \AlgFairTwo, \AlgFairThree.
   \end{itemize}
\end{enumerate}

\begin{table}[!htbp]
\centering
\setlength{\tabcolsep}{6pt}
\resizebox{\linewidth}{!}{\begin{tabular}{lccccc}
\toprule
\textbf{Dataset} & \textbf{Modality} & \textbf{\#Points} & \textbf{Dim.} & \textbf{\#Attributes} & \textbf{Distance} \\
\midrule
Audio    & Speech   & $53{,}387$  & 192   & 3 & $\ell_2$\\
FairFace  & Image   & $97{,}698$  & 768   & 3 & $\ell_2$ \\
CelebA$^\ast$   & Image   & $202{,}599$ & 512   & 1--5 & $\ell_2$\\
GloVe    & Text    & $1{,}183{,}514$ & 100   & 3 & cosine\\
Synthetic & --     & $10{,}000+$ & 128   & 3+  & $\ell_2$\\
\bottomrule
\end{tabular}}
\small
\caption{Summary of the datasets. $^\ast$CelebA has up to $40$ protected attributes, we randomly select $5$ protected attributes to split the dataset.}
\vspace{-0.2in}
\label{tab:datasets}
\end{table}

\noindent {\bf Query and parameters.}
For each dataset, we generate $1000$ feasible queries randomly. We vary $k$ (\# query result size), $m$ (\# of protected attributes), distance function ($\ell_2$ and cosine),  hashing related parameters ($\ell$, $\mu$), and graph based parameters (expansion factor (EF), and maximum out-degree $M$). Default values are $m=3$, $k=10$, distance function is $\ell_2$, $\ell=16$,  $\mu=2$, EF=$8$, $M=8$. For the synthetic data experiment, queries are close to the center.

\noindent {\bf Measures.} We present two types of measures.
\begin{enumerate}[leftmargin=2em]
      \item \textbf{Quality.}  We present three quality measures. 
      \begin{itemize}[leftmargin=0.5em]
        \item Successful queries (\%): The fraction of queries that return feasible results.
          \item  Distance approximation factor (DAF): For a query $Q = (q, \hat{\beta})$, let $S_k(Q)$ denote the retrieved top-$k$ set and $S_k^\star(Q)$ be the ground truth distance. The per-query DAF is defined as:
    $$
    \text{DAF}(Q) = \frac{\sum_{x \in S_k(Q)} \norm{x, q}}{\sum_{x \in S^\star_k(Q)} \norm{x, q}}
    $$
    \item Recall@$k$: Measures the overlapping between ground truth and the solution set, divided by $k$:
    $$
    \text{recall@}k= \frac{ \left|S_k(Q) \,\cap\, S^\star_k(Q) \right| }{k}
    $$
      \end{itemize}
     \item \textbf{Cost.} We measure memory consumption of indexing techniques, as well as running time of the preprocessing and query processing algorithms in seconds.
\end{enumerate}



\noindent \paragraph{\bf Summary of Results}
Our experimental results lead to three key observations. First, we demonstrate that graph-based vector search techniques do not satisfactorily retrieve near neighbors when the underlying data is not well clusterable. In such settings, these methods lead to poor distance bounds, leading to degraded recall even in the absence of fairness constraints. Second, we demonstrate that under complex multi-attribute fairness constraints, these existing techniques further break down: they frequently fail to return feasible results. Third, we demonstrate that our proposed framework consistently achieves exact satisfaction of fairness constraints, maintains high recall, and scales effectively across all evaluated settings. Together these results demonstrate that multi-attribute group fairness cannot be reliably achieved by directly adapting existing vector search based methods, and that fairness-aware retrieval requires dedicated indexing and candidate selection mechanisms.



\subsection{Limitations of Graph Based Techniques}\label{exp:lshvsgraph}
Table~\ref{tab:synthetic_topk} reports results averaged over $200$ queries on synthetic data. The results show that HNSW-based methods (e.g., SIEVE++, Filter-DiskANN) consistently fail to retrieve true near-neighbor candidates, leading to poor DAF and recall. This behavior stems from the reliance of graph-based indexing techniques on navigation through regions of high local connectivity. Under a limited search budget, the graph traversal becomes biased toward densely connected regions that may be far from the query, causing the search to get trapped away from the true neighborhood. Consequently, HNSW-based methods return candidates that are densely connected but distant from the query, resulting in low recall and large DAF. In contrast, LSH-based methods are agnostic to local graph connectivity and instead rely on randomized projections for candidate generation. This enables them to surface sparse yet relevant near neighbors even under severe partition imbalance, yielding substantially better recall and distance guarantees.

\begin{table}[!htbp]
\centering
\begin{threeparttable}
\resizebox{\linewidth}{!}{\begin{tabular}{lccc}
\toprule
\textbf{Method} 
& \textbf{$k$-th DAF} $\downarrow$ 
& \textbf{Recall@10} $\uparrow$ 
& \textbf{Recall@1} $\uparrow$ \\
\midrule
\textbf{Ours} 
& \textbf{1.01} 
& \textbf{0.93} 
& \textbf{0.98} \\
HNSW-based methods 
& 8.35 
& 0.63 
& 0.63 \\
\bottomrule
\end{tabular}}
\small
\caption{\small Comparison between graph based (\textsc{SIEVE++}, \textsc{Filter-DiskANN}) and LSH based indexing techniques in  top-$k$ nearest neighbor retrieval with $k=10$.}
\vspace{-0.2in}
\label{tab:synthetic_topk}
\end{threeparttable}
\end{table}

\vspace{-0.2in}
\subsection{Query Processing}\label{exp:queryp}
We use the real datasets to evaluate the quality of query results  in terms of distance and fairness constraints. These results are reported as the average over $1000$ queries.

\subsubsection{Qualitative Evaluation on Distance}\label{qp:q1}
Figure~\ref{fig:qualitative-approx} shows that all algorithms perform well with varying $k$. In particular, at $k=20$, \textsc{Filter-DiskANN} reaches an average DAF of 1.07, compared with 1.09 for \textsc{SIEVE++} and 1.11 for \AlgNN, while brute force remains exactly 1.00 by definition. A similar ordering holds at $k=15$.

However, varying the number of protected attributes~($m$) reveals that both \textsc{SIEVE++} and \textsc{Filter-DiskANN} degrade sharply for $m=4$ and $m=5$. In these settings, DAF drops to zero, reflecting frequent infeasibility and failed query answering. As 
$m$ increases, required attribute combinations fall into sparse or weakly connected regions for these algorithms, preventing the retrieval of sufficient candidates to satisfy fairness constraints and causing infeasibility. This behavior highlights a fundamental limitation of these methods, precisely the setting that \AlgNN is designed to support.

\begin{figure}[!htbp]
    \centering
    \includegraphics[width=1.0\linewidth, trim=0.7cm 0.09cm 1.6cm 0.2cm, clip]{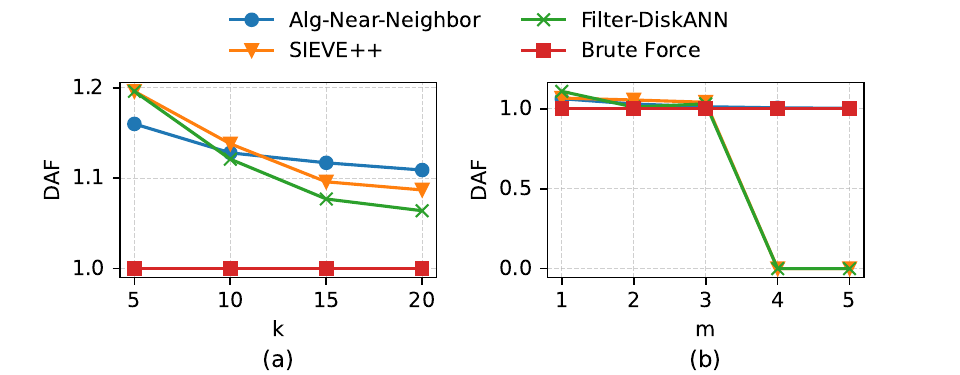}
    \caption{\small Average DAF varying (a) $k$ on Audio dataset, and (b) $m$ on CelebA dataset. Zero DAF means infeasible solutions.}
    \label{fig:qualitative-approx}
\end{figure}





\subsubsection{Qualitative Evaluation on Fairness.} \label{qp:q2}
In these experiments, we evaluate the quality of the proposed disk based solutions with \AlgNN, as well as fairness baseline and retrieval techniques, like {\tt SAIR}, and {\tt JIR}. Figures~\ref{fig:qualitative-fairness} (a) and (b) present recall varying $k$ and $m$, respectively, whereas, (c) and (d) present those for \% successful queries. 
Figures~\ref{fig:qualitative-fairness} (a) and (c) shows that \AlgNN consistently achieves recall@$k=1.0$ and return $100\%$ successful queries, 
Figure~\ref{fig:qualitative-fairness} (a) shows that \AlgNN and \textsc{SIEVE++} perform better at larger $k$ than  {\tt SAIR}, {\tt JIR}, and \textsc{Filter-DiskANN} baselines, indicating that these latter techniques can not handle multi attribute fairness constraints when independence does not satisfy. 
As expected, Figures~\ref{fig:qualitative-fairness} (b) and (d) demonstrate that quality degrades in general when $m$ increases, however, \AlgNN outperforms all other algorithms.


\begin{figure}[ht]
    \centering
    \includegraphics[width=1.0\linewidth, trim=0.69cm 0.3cm 1.8cm 0.25cm, clip]{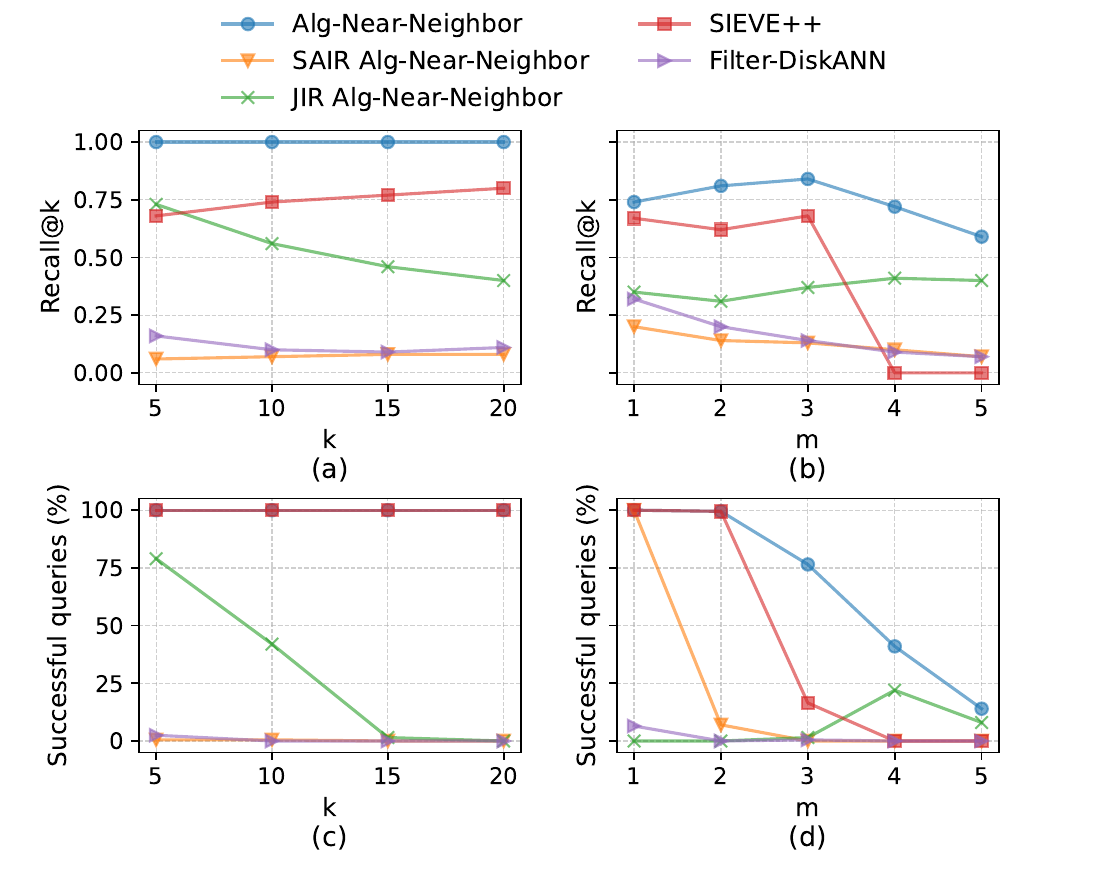}
    \caption{\small Comparing different fairness based baselines. (a) recall@$k$ with varying $k$, (b) recall@$k$ with varying $m$, (c) successful queries (\%) with varying $k$, (d) successful queries (\%) with varying $m$. We vary $k$ on FairFace, $m$ on CelebA datasets, and set $\ell=128$ for the LSH-based methods.}
    \label{fig:qualitative-fairness}
\end{figure}







\vspace{-0.1in}
\subsubsection{Qualitative Evaluation on Fairness for LSH Based Techniques}\label{qp:q3}

Figure~\ref{fig:fairness-baselines} evaluates how LSH parameters (hash bucket width $w$ and the number of hash tables $\ell$) affect fairness-constrained retrieval.
Across all configurations, \AlgNN consistently achieves 100\% successful queries, confirming that its Cartesian-attribute indexing always retrieves enough candidates to satisfy the fairness constraints.
In contrast, {\tt SAIR} and {\tt JIR} frequently fail to find feasible solutions: {\tt SAIR} succeeds in roughly 40\% of queries, and {\tt JIR} in only 20\%, indicating that both methods struggle to cover all required partitions as parameters change.

A similar trend appears in recall@$k$.
\AlgNN maintains the highest recall and benefits from increasing $\ell$, while recall of {\tt JIR} remains lower  and {\tt SAIR} reaches to $0$, due to its independent attribute indexing. Furthermore, {\tt JIR} and {\tt SAIR} show minimal sensitivity to either $w$ or $\ell$, indicating that their designs do not effectively leverage additional hashing resources. Overall, \AlgNN preserves multi-attribute structure and improves with hashing capacity, compared to the baselines.

\begin{figure}[h]
    \centering
    \includegraphics[width=1.0\linewidth, trim=0.75cm 0.25cm 1.8cm 0.27cm, clip]{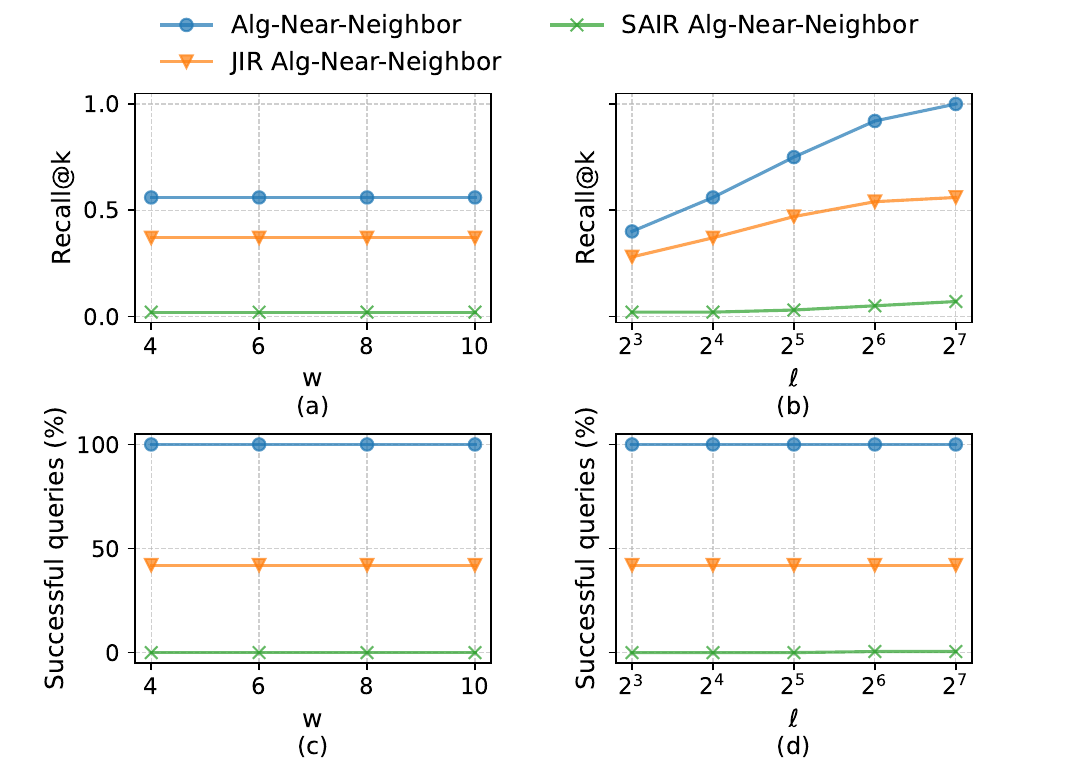}
    \caption{\small Qualitative evaluation varying LSH parameters using FairFace dataset. (a) recall@$k$ varying $w$, (b) recall@$k$ varying $\ell$, (c) successful queries (\%) varying $w$, (d) successful queries (\%) varying $\ell$.}
    \label{fig:fairness-baselines}
\end{figure}








\subsubsection{Scalability Evaluation}\label{qp:scale}
Sections~\ref{qp:q1},~\ref{qp:q2}, and~\ref{qp:q3} have already highlighted the limitations of existing baselines in producing high-quality results for the \MAtFair problem. 
In this section, we therefore present scalability of our solutions only with brute-force.

Figure~\ref{fig:post-time} reports the latency required to find the set of $k$ satisfying the fairness constraints.
In Figure~\ref{fig:post-time} (a), the query processing time---including candidate retrieval and post-processing---of \AlgNN remains nearly flat as $k$ increases, whereas the brute-force solver grows noticeably and remains $3$--$4\times$ slower, reflecting the efficiency of our indexed retrieval.
In Figure~\ref{fig:post-time} (b), we fix $k=20$ and $m=3$ and compare query time across datasets.
\AlgNN consistently maintains superior latency on all datasets, while brute force becomes prohibitively slow on larger data such as GloVe, demonstrating the scalability gap between the two approaches.

Figure~\ref{fig:post-time} (c) shows a small increase in post-processing time from $m=1$ to $m=2$, followed by a significant jump at $m=3$.
This reflects the algorithmic behavior of the selection problem: with $m \le 2$, the feasibility check can be solved efficiently in polynomial time, whereas $m \ge 3$ requires ILP-based solver for the NP-hard formulation, leading to substantially higher latency.
In Figure~\ref{fig:post-time} (b), we compare \AlgPPTwo and \AlgPPThree .
Across all values of $k$, \AlgPPTwo, which uses network flow maintains roughly $4\times$ lower latency than \AlgPPThree, which uses ILP, highlighting its practical advantage as a scalable post-processing strategy .





\begin{figure}[!htbp]
    \centering
    \includegraphics[width=1.0\linewidth, trim=0.27cm 0.30cm 0.25cm 0.21cm, clip]{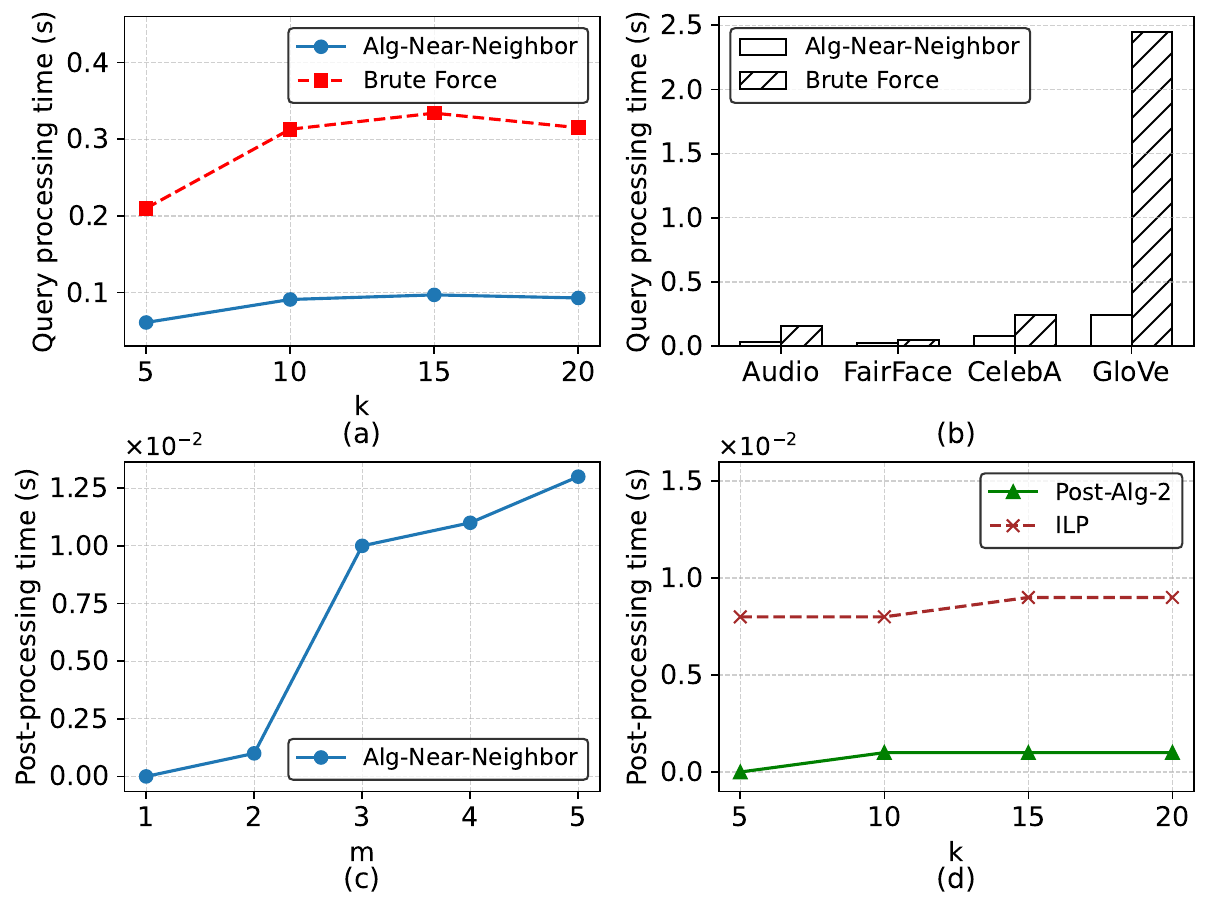}
    \caption{\small  (a) Query processing time (retrieval + post-processing) as $k$ varies (with $m=2$). (b) Comparing query processing time across datasets (with fixed $k=20$ and $m=3$). (c) \AlgNN overhead when increasing $m$ for $k=20$, (d) Comparison between Network Flow and ILP at varying $k$ for $m=2$.}
    \label{fig:post-time}
\end{figure}

Figure~\ref{fig:candidates} compares the fraction of scanned candidates between \AlgNN and brute-force varying $m$, $k$.
Across all settings, \AlgNN scans dramatically fewer candidates, often under 10\% of the dataset, while brute force must examine a substantially larger portion, exceeding 60\% for small $m$.
As $m$ increases, bitmap-based filtering in \AlgNN becomes more selective, further shrinking the candidate set, whereas brute force continues to scale poorly.

\begin{figure}[!htbp]
    \centering
    \includegraphics[width=1.0\linewidth, trim=0.27cm 0.32cm 0.25cm 0.21cm, clip]{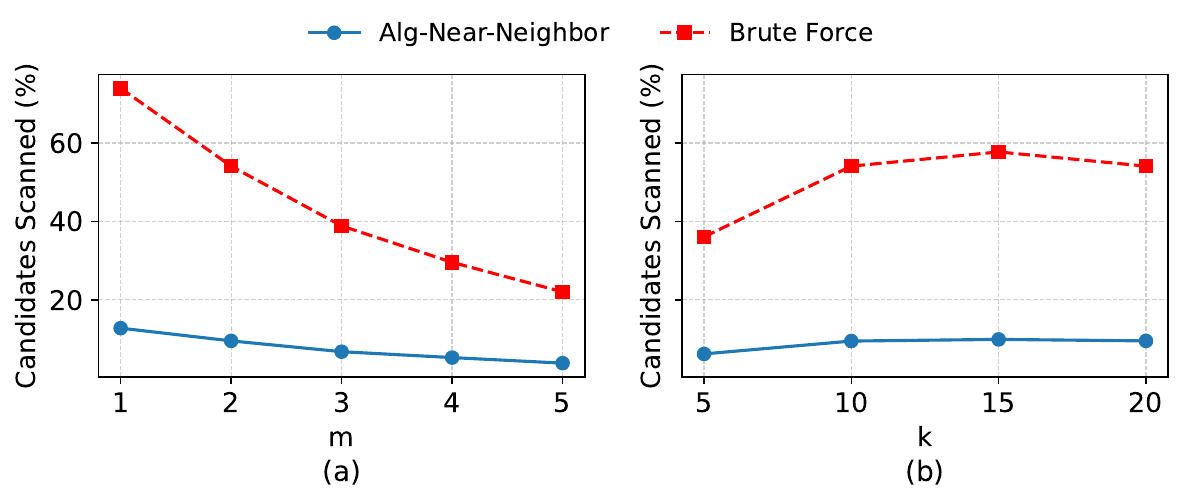}
    \caption{\small Comparison between \AlgNN and brute-force on fraction of candidates scanned to satisfy fairness constraints at varying (a) $m$ and (b) $k$.}
    \label{fig:candidates}
\end{figure}






\subsection{Evaluation of Preprocessing}\label{exp:prep}
In this section, we evaluate the preprocessing costs required to support fairness-aware retrieval.
Our focus is specifically on the overhead introduced to ensure fairness.
For this reason, we do not include the graph-based methods in this comparison.

\subsubsection{Comparison with Fairness Based Baselines}

Figure~\ref{fig:preprocessing-cost} compares the preprocessing overhead of different LSH-based indexing strategies, reporting both index construction time and storage cost across three datasets.
Overall, \AlgNN incurs lower indexing time and storage overhead, as it avoids duplicating hashed data points across multiple per-attribute indexes, which is inherent in \AlgNN ({\tt SAIR}).
An exception arises for the FairFace dataset, where \AlgNN requires higher storage due to the large number of Cartesian-attribute partitions, whose cumulative indexing cost---including stored hash functions---outweighs the storage savings from eliminating data duplication.

\begin{figure}[t]
    \centering
    \includegraphics[width=1.0\linewidth, trim=1.42cm 1.6cm 2.19cm 0.42cm, clip]{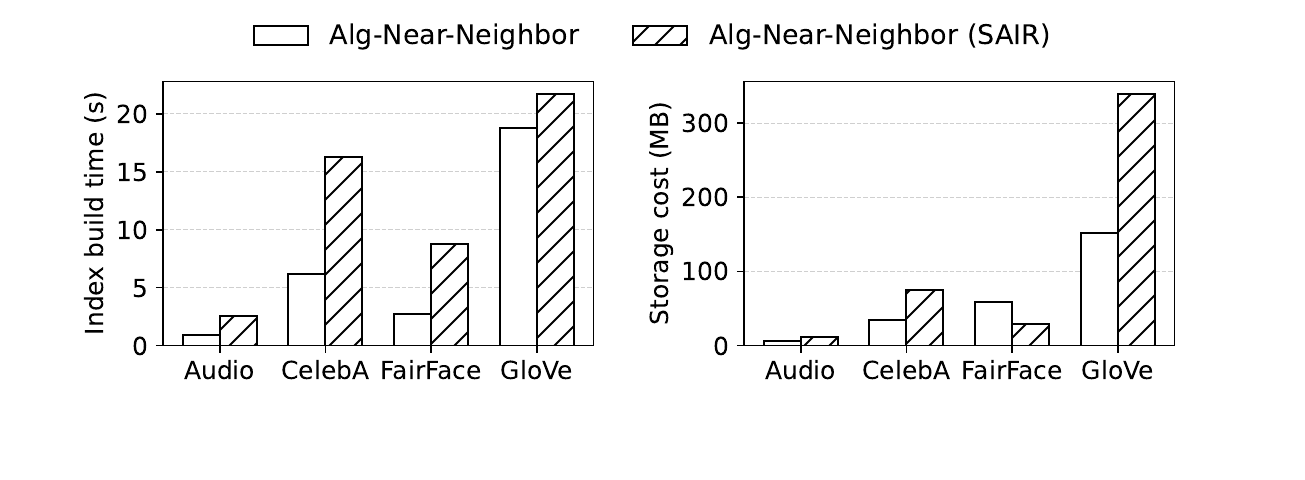}
    \caption{Preprocessing cost of fairness-aware LSH indexing.}
    \label{fig:preprocessing-cost}
\end{figure}

\subsubsection{Fine-grained Evaluation Varying LSH parameters}
Figure~\ref{fig:preprocessing-vary-LSH} presents the analysis of preprocessing cost as key LSH parameters vary, specifically the number of hash tables $\ell$ and the concatenation length $\mu$, across the real-world datasets and report both index build time and storage cost. Only the GloVe dataset uses cosine distance, whereas, the remaining ones use $\ell_2$ distance. Increasing $\ell$ takes higher build time and storage cost as demonstrated in Figures~\ref{fig:preprocessing-vary-LSH} (a) and (c), as expected. GloVe turns out to be most expensive because of its underlying distance function. Similarly, increasing the concatenation length $\mu$ results in a higher pre-processing cost as shown in Figures~\ref{fig:preprocessing-vary-LSH} (b) and (d), although the growth rate is comparatively milder than that of $\ell$.
An exception occurs for the storage cost for GloVe dataset, where the curve shows a minimal increase in storage requirement as $\mu$ increase. Overall, these results demonstrate the efficacy of the pre-processing technique.

\begin{figure}[t]
    \centering
    \includegraphics[width=1.0\linewidth, trim=0.75cm 0.16cm 1.9cm 0.54cm, clip]{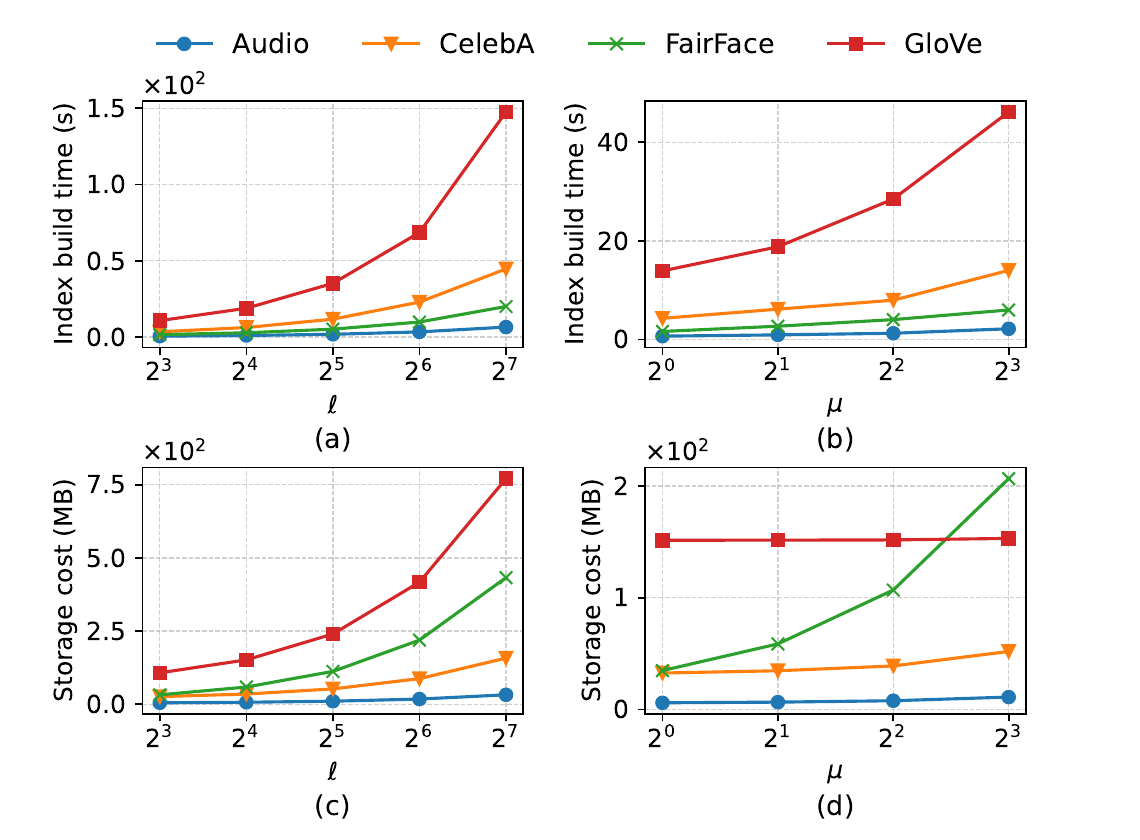}
    \caption{\small Preprocessing cost of \AlgNN on various datasets with varying $\ell$ and $\mu$.}
    \label{fig:preprocessing-vary-LSH}
\end{figure}

\section{Related Work}~\label{sec:relatedwork}
\noindent {\bf NN Search in Vector Databases}
Vector databases have become essential infrastructure for AI applications, enabling efficient storage and retrieval of text, image, and multimodal embeddings~\cite{han2023comprehensive, jing2024large, pan2024survey}. Systems such as {\tt Milvus}~\cite{wang2021milvus}, {\tt Qdrant}~\cite{qdrant2025}, {\tt SingleStore\text{-}V}~\cite{chen2024singlestore}, and {\tt TigerVector}~\cite{liu2025tigervector} pair vector indexes with database functionality to support hybrid similarity{+}filter queries, playing a central role in RAG and recommendation pipelines where vector search quality directly impacts downstream performance.

Current indexing methods fall into three families~\cite{pan2024survey}: LSH-based~\cite{huang2015query,indyk1998approximate,datar2004locality,gan2012locality}, graph-based~\cite{malkov2018efficient,simhadridiskann,patel2024acorn,li2025sieve,acorn1,acorn2,acorn3}, and tree-based~\cite{dasgupta2008random,dasgupta2013randomized,marius2009flann}. While early work offered no query predicates and later systems introduced simple filters~\cite{li2025sieve,acorn1,acorn2,acorn3,patel2024acorn}, none of these methods as is support the complex count-based constraints required for group fairness. LSH approaches provide formal guarantees, whereas graph methods are heuristic but effective for clusterable data; however, their designs do not naturally generalize to fairness-aware retrieval.

{\em We show that these indexing strategies cannot be used as-is to enforce complex group fairness constraints, whereas our postprocessing framework could  be paired with any of them.}

\noindent{\bf Group Fairness}
Group fairness in search, recommendation, and content curation aims to ensure proportional representation of protected groups in retrieved results~\cite{bei2022candidate,islam2022satisfying,oesterling2024multigroup}. Existing methods typically apply fairness through post-processing or re-ranking of a candidate set produced by standard ANN search. Recent work proposes Multi-Group Proportional Representation (MPR)~\cite{oesterling2024multigroup}  as a metric, and optimization goal, for proportional fairness over intersectional groups. In subsequent work, Jung et al.~\cite{jung2025multi} adapt MPR to text-to-image generation, using MPR to measure and train for representational balance across intersectional groups in model outputs.

Multi-attribute fairness further increases complexity\cite{celis2019classification,islam2022satisfying}: \cite{islam2022satisfying} introduces a re-ranking framework that enforces proportional representation constraints in top-$k$ results by selectively substituting items in a baseline ranking. The ballot substitution problem is shown to be weakly NP-hard for 2-attributes~\cite{islam2022satisfying}, and three attributes reduce to the strongly NP-hard 3-dimensional matching problem. This motivates hybrid approaches that combine approximate retrieval with exact constrained selection.

{\em MPR is a powerful metric and learning/selection objective for multi-group proportionality at the distributional level, but does not necessarily certify that per-attribute counts are exactly met for every query, as studied in our problem. }

{\bf Mitigating bias in Embedding space.}
An orthogonal approach to our effort are techniques that are used to produce vector embeddings of text and images that target diversity or mitigate bias. These include modifying loss
functions to encourage group fairness~\cite{debias1}, adversarial training~\cite{debias2}, or disentangling representations~\cite{debias3}. Many popular embedding models are not trained with such approaches, so post-processing methods for fairness have also been
developed for multimodal models such as CLIP ~\cite{clip}, including CLIP-clip~\cite{clip1}, CLIP-debias~\cite{clip2},
and FairCLIP~\cite{wang2022fairclip}. However, these are limited to vision model and are unable to satisfy any user defined fairness constraints.
\vspace{-0.1in}
\section{Conclusion}
This work introduces a principled framework for efficient $k$-NN search in vector databases under multi-attribute group-fairness constraints. We formalize the \MAtFair problem and show that it is computationally hard for three or more protected attributes, motivating hybrid retrieval and optimization techniques. Our indexing design couples fairness-aware bitmap partitioning with LSH-based candidate generation, ensuring efficiency while maintaining fairness feasibility. We also develop a family of query-processing algorithms with provable guarantees based on combinatorial optimization. Experiments on large-scale vector datasets show that existing methods cannot be directly extended to address this problem, whereas our framework augments them effectively to support this new, fairness-aware retrieval task.

\bibliographystyle{ACM-Reference-Format}
\bibliography{references}

\end{document}